\documentclass[accepted]{uai2026}
                        
\usepackage[american]{babel}

\usepackage{natbib} 
\usepackage{mathtools} 
\usepackage{booktabs} 
\usepackage{tikz} 

\usepackage{amsthm}
\newtheorem{definition}{Definition}

\newtheorem{theorem}{Theorem}[section] 
\newtheorem{corollary}[theorem]{Corollary}
\theoremstyle{remark}
\newtheorem{remark}{Remark} 
\usepackage{comment}

\usepackage{algorithm}

\usepackage{algpseudocode}

\usepackage{amsmath}    
\usepackage{amssymb}    
\usepackage{bm} 
\usepackage{colortbl}
\usepackage{xcolor}

\DeclareMathOperator*{\argmax}{argmax}

\usepackage{amsmath}
\usepackage{amssymb}
\usepackage{amsfonts}
\usepackage{stmaryrd} 
\SetSymbolFont{stmry}{bold}{U}{stmry}{m}{n}
\usepackage{multirow}
\usepackage{bbm}
\usepackage{bm}
\usepackage{enumitem}
\usepackage[dvipsnames]{xcolor}
\usepackage{siunitx}

\newcommand{\lb}[1]{\ensuremath{\mathord{\underline{#1}}}}
\newcommand{\ub}[1]{\ensuremath{\mathord{\overline{#1}}}}
\newcommand{\name}{\textsc{SecureCROWN}}

\newcommand{\tcls}{y}

\title{Privacy-Preserving Robustness Verification for Neural Networks}

\author[1,2]{Nianyun~Song}
\author[3]{Xiaokun~Luan}
\author[1]{Yu~Guo}
\author[1]{Rongfang~Bie}
\author[3]{Meng~Sun}
\author[2]{Xiyue~Zhang\thanks{Correspondence to XZ (xiyue.zhang@bristol.ac.uk)}}

\affil[1]{%
    School of Artificial Intelligence, Beijing Key Laboratory of Artificial Intelligence for Education, Engineering Research Center of Intelligent Technology and Educational Application (Ministry of Education), Beijing Normal University, \protect\\
    Beijing, China
}

\affil[2]{%
    School of Computer Science\\
    University of Bristol\\
    Bristol, UK
}
\affil[3]{%
    School of Mathematical Sciences\\
    Peking University\\
    Beijing, China
}

\begin{document}
\maketitle

\begin{abstract}
Neural network verification and data privacy are inherently in tension: \emph{verification demands full access to model parameters and input data, yet both are increasingly restricted by privacy regulations and intellectual property constraints}. This tension has left robustness verification impractical in privacy-sensitive domains.
In this work, we address this gap with \name{}, the first framework for privacy-preserving neural network robustness verification. Built upon secure two-party computation (2PC), our framework enables a model owner and a data owner to jointly compute certified robustness bounds---revealing only the final result while provably protecting both parties' private data under the semi-honest security model.
A key challenge is securely computing the conditional operations in Linear Bound Propagation, where the data-dependent branching is incompatible with standard secure computation protocols.
We eliminate branching by formulating conditional logic as 
continuous arithmetic operations.
Additionally, we introduce a Newton--Raphson refinement method to improve numerical stability.
Extensive analysis and experiments show that \name{} strictly matches plaintext verification results, while completing in 0.1--200s across varied model sizes and communication settings (LAN/WAN), demonstrating the feasibility of privacy-preserving neural network verification.

\end{abstract}

\section{Introduction}
\label{sec:introduction}

The field of deep learning (DL) has experienced enormous growth recently,
with deep neural networks (DNNs) now deployed in a variety of safety-critical applications, including medical diagnosis~\citep{litjens2017survey,esteva2017dermatologist} and healthcare~\citep{esteva2019guide,davenport2019potential}. In these domains, safety and robustness of DNNs are essential. 
However, DNNs are susceptible to perturbations in their inputs~\citep{Szegedy13intriguing}, such as noise and illumination variations~\citep{Goodfellow2014explain,Hendrycks2019Benchmark}. 
To ensure DNN robustness with worst-case guarantees, significant progress in verification approaches~\citep{Wu2024Marabou,wang2021beta,singh2019abstract} has been made, which can certify that a model's prediction remains stable regardless of data noise or environmental changes.

Despite these advances, existing verification techniques assume centralized access to both model parameters and input data in plaintext.
This assumption is often unrealistic in privacy-sensitive deployments.
In practice, user data may be subject to strict privacy regulations, such as the General Data Protection Regulation (GDPR)~\citep{voigt2017eu}. 
Meanwhile, proprietary model parameters constitute valuable intellectual property and may not be disclosed to third parties. 
Thus, the standard verification setting, where both model and data are fully accessible, is incompatible with many privacy-regulated collaborative scenarios.

We consider a two-party verification setting in which a model owner $P_0$ holds a proprietary DNN, while a data owner $P_1$ holds sensitive input data. We aim to verify the robustness of the model on the given input such that only the verification result is revealed, while both the model parameters and the input data remain private. Achieving this requires jointly performing robustness verification without exposing either party's confidential information. 

Secure multi-party computation (MPC)~\citep{Evans2018MPC} provides a principled cryptographic approach that enables multiple parties to jointly evaluate functions on their private inputs without revealing their data to one another.
While MPC has been successfully applied to privacy-preserving inference and training~\citep{Knott2021CrypTen,Tan2021cryptgpu,Wagh2021falcon,Kumar2020cryptflow,Gupta2025shark,Zhang2021Survey}, extending it to robustness verification faces two fundamental challenges:
\emph{(i) Data-dependent branching:} Convex relaxation of ReLU activations requires distinct parameters depending on secret-shared activation states. In MPC, resolving these conditionals requires a secure comparison for every neuron, scaling linearly with the total number of neurons and rendering naïve implementations computationally prohibitive.
\emph{(ii) Precision sensitivity in deep recursion:} Unlike MPC-based inference, which tolerates minor numerical errors during a single forward pass, verification computes certified bounds via deep backward recursion that amplifies numerical errors. Because safety guarantees depend strictly on the sign of the final margin, these compounded errors can invert near-zero bounds, compromising verification soundness.

To bridge this gap, we propose \name{}, the first privacy-preserving framework for neural network robustness verification built on secure two-party computation (2PC). 
It enables verification of DNNs while ensuring that sensitive input data remains confidential and proprietary model parameters are not disclosed.
Specifically, we develop \textit{(i)} a secure implementation of Linear Bound Propagation (LBP) that computes certified robustness bounds under privacy constraints; \textit{(ii)} a unified branch-free reformulation of the conditional logic in ReLU relaxation and bound propagation, efficiently realized via function secret sharing (FSS); and \textit{(iii)} a Newton–Raphson-based refinement method for reciprocal computation to improve numerical precision and stability.
Furthermore, we provide formal security guarantees in the semi-honest adversary model and conduct extensive experiments validating that our protocol achieves verification precision comparable to the plaintext verifier while incurring moderate cryptographic overhead.

\section{Related Work}
Modern neural network verification methods primarily address robustness analysis in a plaintext setting, where both model parameters and user inputs are assumed to be fully accessible.
Representative verifiers include $\alpha,\beta$-CROWN~\citep{zhang2018efficient,xu2020automatic,xu2021fast,wang2021beta}, CORA~\citep{Althoff15CORA}, Marabou~\citep{Wu2024Marabou,Katz2019Marabou}, and NeuralSAT~\citep{Duong2024neuralsat,Duong2025NeuralSAT}. 
These approaches rely on techniques such as convex relaxation~\citep{salman2019convex}, abstract interpretation~\citep{singh2019abstract}, and constraint-solving~\citep{Hai2023neuralsat} to compute certified robustness bounds or to solve verification queries exactly. 

Meanwhile, substantial progress has been made in privacy-preserving machine learning through cryptographic protocols. A prominent line of work employs MPC to enable secure inference.
CrypTen~\citep{Knott2021CrypTen} supports private inference over secret-shared inputs and models in a two-party setting, assuming a semi-honest threat model.
CryptGPU~\citep{Tan2021cryptgpu} extends this framework to a three-party system with optimized GPU-based protocols under semi-honest security.
Falcon~\citep{Wagh2021falcon} implements strengthened malicious security in a three-party setting and combines SecureNN~\citep{wagh2019securenn} and ABY$^{3}$~\citep{mohassel2018aby3} to improve protocol efficiency. 
ObliviGate~\citep{obligate} further achieves maliciously secure inference while concealing the model architecture.
CrypTFlow~\citep{Kumar2020cryptflow} proposes an end-to-end toolchain that compiles predefined TensorFlow inference code into MPC protocols, supporting both semi-honest and malicious adversaries.
Beyond MPC, zero-knowledge proof systems have been leveraged to provide verifiable inference without revealing model parameters or inputs~\citep{Lee2024zk,Maheri2025telesparse}, while homomorphic encryption (HE) enables neural network inference directly over encrypted data~\citep{Lee2022HE,Lou2021HEMET}.

However, these cryptographic methods focus exclusively on secure or verifiable inference. They do not address robustness verification, which requires computing and propagating symbolic bounds beyond forward inference alone. 
In contrast, we establish a privacy-preserving framework for neural network robustness verification. To the best of our knowledge, our work is the first to integrate formal robustness verification with cryptographic privacy guarantees.

\section{Preliminaries}
\label{sec:preliminaries}

\subsection{Local Robustness Verification}
\label{subsec:nn_verification}
Consider an $L$-layer feedforward neural network classifier $f: \mathbb{R}^{d_0} \rightarrow \mathbb{R}^{K}$, where $d_0$ is the input dimension and $K$ is the number of classes. The network is parameterized by weight matrices $\mathbf{W}^{(l)} \in \mathbb{R}^{d_l \times d_{l-1}}$ and bias vectors $\mathbf{b}^{(l)} \in \mathbb{R}^{d_l}$ for layers $l \in \{1, \ldots, L\}$, with $d_L = K$. The pre-activation at each layer is defined as $\mathbf{z}^{(1)} = \mathbf{W}^{(1)}\mathbf{x} + \mathbf{b}^{(1)}$ for $l=1$, and $\mathbf{z}^{(l)} = \mathbf{W}^{(l)}\sigma(\mathbf{z}^{(l-1)}) + \mathbf{b}^{(l)}$ for $l \geq 2$.
 In this work, we consider the ReLU activation $\sigma(\cdot) = \max(0, \cdot)$.

The network output is $f(\mathbf{x}) = \mathbf{z}^{(L)}$, with predicted class $\tcls = \argmax_{k} f_k(\mathbf{x})$.
Given an input $\mathbf{x}_0$ correctly classified as class $\tcls$, the network is \emph{locally robust} within an $\ell_p$-ball of radius $\epsilon$ if:
\begin{equation}\label{eq:robustness}
    {f}_{\tcls,j} \triangleq \min_{\mathbf{x} \in \mathcal{B}_p(\mathbf{x}_0, \epsilon)} \bigl(f_\tcls(\mathbf{x}) - f_j(\mathbf{x})\bigr) > 0, \quad \forall\, j \neq \tcls,
\end{equation}
where $\mathcal{B}_p(\mathbf{x}_0, \epsilon) = \{\mathbf{x} : \|\mathbf{x} - \mathbf{x}_0\|_p \leq \epsilon\}$. Since exactly verifying Eq.~\eqref{eq:robustness} is NP-hard for ReLU networks~\citep{katz2017reluplex}, verification methods typically compute a certified lower bound $\underline{f}_{\tcls,j}$ via linear relaxation~\citep{salman2019convex}.

\textbf{Linear Relaxation.}
Consider robustness verification under $\ell_p$ input perturbations $\mathbf{x} \in \mathcal{B}_p(\mathbf{x}_0, \epsilon)$.
Let $\lb{z}, \ub{z}\in\mathbb{R}$ be the pre-activation lower and upper bounds for a neuron $z$, i.e., $\lb{z} \leq z \leq \ub{z}$. 
To handle the nonlinearity of the activation function, linear relaxation constructs a pair of linear bounding functions—a lower bound $\lb{\alpha} z + \lb{\beta}$ and an upper bound $\ub{\alpha} z + \ub{\beta}$—satisfying $\lb{\alpha} z + \lb{\beta} \leq \sigma(z) \leq \ub{\alpha} z + \ub{\beta}$ for all $z \in [\lb{z}, \ub{z}]$.
A commonly used bounding strategy for ReLU in CROWN~\citep{zhang2018efficient} sets the upper-bound parameters $(\ub{\alpha}, \ub{\beta})$ as:
\begin{equation}\label{eq:relu_relax}
(\ub{\alpha}, \ub{\beta}) =
\begin{cases}
(1,0), & \text{if } \lb{z} \ge 0, \\
(0,0), & \text{if } \ub{z} \le 0, \\
\left(\alpha^*, -\alpha^* \lb{z}\right), 
& \text{otherwise},
\end{cases}
\quad
\text{where }
\alpha^* = \tfrac{\ub{z}}{\ub{z}-\lb{z}},
\end{equation}
and the lower bound shares the same slope with zero intercept: $(\lb{\alpha}, \lb{\beta}) = (\ub{\alpha}, 0)$.

\textbf{Linear Bound Propagation.}
LBP in CROWN-series verifiers computes certified bounds by propagating backward linear relaxations through the network.
To bound $\mathbf{z}^{(l)}$ for any layer $l \in \{1, \ldots, L\}$, LBP constructs linear functions such that
\begin{equation}\label{eq:general_lb}
\mathbf{A}^{(0)} \mathbf{x} + \underline{\mathbf{c}}^{(0)} \leq \mathbf{z}^{(l)} \leq \mathbf{A}^{(0)} \mathbf{x} + \bar{\mathbf{c}}^{(0)}, \ \ \forall \mathbf{x} \in \mathcal{B}_p(\mathbf{x}_0, \epsilon),
\end{equation}
where $\mathbf{A}^{(0)} \in \mathbb{R}^{d_l \times d_0}$ and $\underline{\mathbf{c}}^{(0)}, \bar{\mathbf{c}}^{(0)} \in \mathbb{R}^{d_l}$. Since our relaxation uses identical slopes $\underline{\alpha} = \bar{\alpha}$, the coefficient matrix $\mathbf{A}^{(0)}$ is shared for both bounds, while distinct intercepts $\underline{\beta} \neq \bar{\beta}$ yield separate constants $\underline{\mathbf{c}}^{(0)}$ and $\bar{\mathbf{c}}^{(0)}$. 

For $l = 1$, since $\mathbf{z}^{(1)} = \mathbf{W}^{(1)} \mathbf{x} + \mathbf{b}^{(1)}$ is purely an affine transformation of $\mathbf{x}$, we directly obtain $\mathbf{A}^{(0)} = \mathbf{W}^{(1)}$ and $\underline{\mathbf{c}}^{(0)} = \bar{\mathbf{c}}^{(0)} = \mathbf{b}^{(1)}$. For $l \geq 2$, the backward recursion requires relaxation parameters $(\boldsymbol{\alpha}^{(t)}, \underline{\boldsymbol{\beta}}^{(t)}, \bar{\boldsymbol{\beta}}^{(t)})$ for each intermediate layer $t \in \{1, \ldots, l-1\}$, where $\boldsymbol{\alpha}^{(t)} \in \mathbb{R}^{d_t}$ denotes slopes and $\underline{\boldsymbol{\beta}}^{(t)}, \bar{\boldsymbol{\beta}}^{(t)} \in \mathbb{R}^{d_t}$ intercepts; these are determined by the pre-activation bounds via Eq.~\eqref{eq:relu_relax}. 
The recursion maintains $\mathbf{A}^{(t)} \in \mathbb{R}^{d_l \times d_t}$ and $\underline{\mathbf{c}}^{(t)}, \bar{\mathbf{c}}^{(t)} \in \mathbb{R}^{d_l}$ such that $\mathbf{A}^{(t)} \mathbf{z}^{(t)} + \underline{\mathbf{c}}^{(t)} \leq \mathbf{z}^{(l)} \leq \mathbf{A}^{(t)} \mathbf{z}^{(t)} + \bar{\mathbf{c}}^{(t)}$. Starting with $\mathbf{A}^{(l)} = \mathbf{I}_{d_l}$ and $\underline{\mathbf{c}}^{(l)} = \bar{\mathbf{c}}^{(l)} = \mathbf{0}$, the backward updates for $t = l, \ldots, 2$ are given by:
\begin{align}
\hat{\mathbf{A}}^{(t)} &= \mathbf{A}^{(t)} \mathbf{W}^{(t)}, \label{eq:Ahat_update} \\
\mathbf{A}^{(t-1)} &= \hat{\mathbf{A}}^{(t)} \operatorname{diag}(\boldsymbol{\alpha}^{(t-1)}), \label{eq:A_update} \\
\underline{\mathbf{c}}^{(t-1)} &= \underline{\mathbf{c}}^{(t)} + \mathbf{A}^{(t)} \mathbf{b}^{(t)} + \underline{\boldsymbol{\delta}}^{(t-1)}, \label{eq:c_lb_update} \\
\bar{\mathbf{c}}^{(t-1)} &= \bar{\mathbf{c}}^{(t)} + \mathbf{A}^{(t)} \mathbf{b}^{(t)} + \bar{\boldsymbol{\delta}}^{(t-1)}, \label{eq:c_ub_update}
\end{align}
where the $i$-th element (for $i \in \{1, \ldots, d_l\}$) of the intercept contribution is $\underline{\delta}^{(t-1)}_i = \sum_{j:\, \hat{A}^{(t)}_{ij} > 0} \hat{A}^{(t)}_{ij} \underline{\beta}^{(t-1)}_j + \sum_{j:\, \hat{A}^{(t)}_{ij} \leq 0} \hat{A}^{(t)}_{ij} \bar{\beta}^{(t-1)}_j$, and $\bar{\boldsymbol{\delta}}^{(t-1)}$ is defined analogously with $\underline{\beta}$ and $\bar{\beta}$ swapped. The recursion terminates at the input layer yielding $\mathbf{A}^{(0)} = \mathbf{A}^{(1)} \mathbf{W}^{(1)}$, $\underline{\mathbf{c}}^{(0)} = \underline{\mathbf{c}}^{(1)} + \mathbf{A}^{(1)} \mathbf{b}^{(1)}$, and $\bar{\mathbf{c}}^{(0)} = \bar{\mathbf{c}}^{(1)} + \mathbf{A}^{(1)} \mathbf{b}^{(1)}$. 
By applying H\"older's inequality over the $\ell_p$-ball $\mathcal{B}_p(\mathbf{x}_0, \epsilon)$ (where $1/p + 1/q = 1$), the symbolic bounds in Eq.~\eqref{eq:general_lb} are converted into concrete element-wise bounds:
\begin{align}
\underline{z}^{(l)}_i &= \mathbf{A}^{(0)}_{i,:} \mathbf{x}_0 + \underline{c}^{(0)}_i - \epsilon \|\mathbf{A}^{(0)}_{i,:}\|_q, \label{eq:z_lb} \\
\bar{z}^{(l)}_i &= \mathbf{A}^{(0)}_{i,:} \mathbf{x}_0 + \bar{c}^{(0)}_i + \epsilon \|\mathbf{A}^{(0)}_{i,:}\|_q. \label{eq:z_ub}
\end{align}

\textit{Margin verification.}
For margin verification between classes $\tcls$ and $j$, we apply LBP to the output layer ($l = L$) by setting $\mathbf{W}^{(L)} \leftarrow \mathbf{W}^{(L)}_{\tcls,:} - \mathbf{W}^{(L)}_{j,:} \in \mathbb{R}^{1 \times d_{L-1}}$ and $b^{(L)} \leftarrow b^{(L)}_{\tcls} - b^{(L)}_{j} \in \mathbb{R}$. With $d_L = 1$, the coefficient matrix $\mathbf{A}^{(t)}$ reduces to a row vector and $\underline{c}^{(t)}$ to a scalar. The certified lower bound is:
\begin{equation}\label{eq:final_lb}
\underline{f}_{\tcls,j} = \mathbf{A}^{(0)} \mathbf{x}_0 + \underline{c}^{(0)} - \epsilon \|\mathbf{A}^{(0)}\|_q.
\end{equation}
Local robustness is verified if $\underline{f}_{\tcls,j} > 0$.

\subsection{Secure Computation Primitives}
\label{subsec:2pc}

In the standard 2PC setting~\citep{yao1982protocols,goldreich1987how}, two parties ($P_0$ and $P_1$) jointly compute a function over their private inputs, revealing nothing beyond the prescribed output. Computations are performed in the ring $\mathbb{Z}_{2^k}$ of bit-length $k$ (modulo $2^k$ is omitted for brevity). A real value $v \in \mathbb{R}$ is encoded in two's complement fixed-point format with $n_f$ fractional bits as $\tilde{v} = \lfloor v \cdot 2^{n_f} \rfloor \in \mathbb{Z}_{2^k}$.

\textbf{Additive Secret Sharing (ASS).}
A value $x\in\mathbb{Z}_{2^k}$ is additively shared as 
$\langle x \rangle = \{\langle x \rangle_0, \langle x \rangle_1\}$ 
such that $x = \langle x\rangle_0+\langle x\rangle_1$.
To generate shares, $P_0$ samples $\langle x\rangle_0 \in \mathbb{Z}_{2^k}$ 
uniformly at random and sets $\langle x\rangle_1 = x-\langle x\rangle_0$; 
each individual share is information-theoretically independent of $x$.
ASS supports linear operations locally. Given shares $\langle x \rangle, 
\langle y \rangle$ and a public value $c\in\mathbb{Z}_{2^k}$, party $P_i$ 
($i \in \{0,1\}$) computes
\begin{equation*}
    \langle x + y \rangle_i = \langle x \rangle_i + \langle y \rangle_i, 
    \quad \langle c \cdot x \rangle_i = c \cdot \langle x \rangle_i.
\end{equation*}
For adding a public constant, one party (e.g., $P_0$) absorbs it locally: 
$\langle x+c\rangle_0=\langle x\rangle_0+c$ and 
$\langle x+c\rangle_1=\langle x\rangle_1$.

\textbf{Secure Multiplication via Beaver Triples.}
Nonlinear operations require interaction. A standard technique uses a Beaver triple $(\langle a \rangle, \langle b \rangle, \langle c \rangle)$ with $c = a \cdot b$. Given $\langle x\rangle,\langle y\rangle$, parties compute masked differences $e = x - a$ and $d = y - b$ by locally forming $\langle e\rangle=\langle x\rangle-\langle a\rangle$ and $\langle d\rangle=\langle y\rangle-\langle b\rangle$, then opening $e$ and $d$ (i.e., exchanging shares and reconstructing the public values).
They then output product shares
\begin{equation*}
     \langle x \cdot y \rangle_i = \langle c \rangle_i + d \cdot \langle a \rangle_i + e \cdot \langle b \rangle_i + (1-i) \cdot e \cdot d,
\end{equation*}
where $i \in \{0, 1\}$.
This yields $\langle x\cdot y\rangle_0+\langle x\cdot y\rangle_1 = xy$, followed by a truncation procedure to restore the fixed-point scale, i.e., $\langle \widetilde{xy} \rangle = \langle \lfloor \tilde{x}\cdot \tilde{y}/ 2^{n_f} \rfloor \rangle$.

\textbf{Function Secret Sharing.}
FSS generalizes secret sharing from values to functions~\citep{BoyleGI15}. A two-party FSS scheme for a function family $\mathcal{G}$ consists of:
\begin{itemize}[leftmargin=15pt, nosep]
    \item $\mathsf{Gen}(1^\lambda, g) \to (k_0, k_1)$: a randomized key generation algorithm that, given a security parameter $\lambda$ and a function $g\in\mathcal{G}$, outputs two correlated keys $(k_0,k_1)$.
    \item $\mathsf{Eval}(i, k_i, x) \to y_i$: a deterministic evaluation algorithm run by $P_i$ on input $x$ (typically a public masked value in MPC), producing an output share $y_i$ such that $y_0 + y_1 = g(x)$ (over an appropriate output ring).
\end{itemize}
Intuitively, each key $k_i$ hides the function $g$, yet the two evaluations add up to the correct value.
A key instantiation is the Distributed Comparison Function (DCF), computing $g^<_{\theta, \rho}(x) = \rho \cdot \mathbb{I}\{x < \theta\}$ for threshold $\theta$ and payload $\rho$, where $\mathbb{I}$ is the indicator function. DCF is a basic primitive for secure comparison and is commonly used to build higher-level operators such as ReLU and fixed-point truncation.

\textbf{FSS-Based Secure Evaluation on Masked Values.}
Many FSS-based 2PC protocols maintain a \emph{masking invariant} on intermediate values. Consequently, for a private value $x\in\mathbb{Z}_{2^k}$, the parties hold a public masked value $\hat{x} = x + r_{\mathrm{in}}$, where $r_{\mathrm{in}}$ is a preprocessed random mask unknown to either party. To evaluate $y=g(x)$ while preserving the invariant, the parties generate FSS keys for a shifted function:
\begin{equation*}
    g_{r_{\mathrm{in}}, r_{\mathrm{out}}}(\hat{x}) = g(\hat{x} - r_{\mathrm{in}}) + r_{\mathrm{out}},
\end{equation*}
where $r_{\mathrm{out}}$ is a fresh output mask. Each party evaluates its key on $\hat{x}$ to obtain shares of $g_{r_{\mathrm{in}},r_{\mathrm{out}}}(\hat{x})$, which reconstruct to $\hat{y} = g(x) + r_{\mathrm{out}}$.
Thus the output remains masked in the same form, enabling secure composition of function evaluations without revealing intermediate plaintexts.

\section{Problem Formulation}
\label{sec:problem}

\textbf{System Setup.}
We study privacy-preserving robustness verification in a 2PC setting with a \emph{model owner} ($P_0$) and a \emph{data owner} ($P_1$). The network architecture (number of layers $L$ and layer dimensions $\{d_l\}_{l=0}^{L}$) is assumed to be public. The model owner $P_0$ holds a trained neural network $f$ with private parameters $\{\mathbf{W}^{(l)}, \mathbf{b}^{(l)}\}_{l=1}^{L}$, which represent proprietary intellectual property.
The data owner $P_1$ holds a verification query consisting of input $\mathbf{x}_0 \in \mathbb{R}^{d_0}$, {robustness} perturbation radius $\epsilon > 0$
{(the magnitude of worst-case perturbations to certify against)}, and the classification target pair (true label $\tcls$, target label $j$). During setup, $P_0$ secret-shares the model parameters, retains $\{\langle \mathbf{W}^{(l)} \rangle_0, \langle \mathbf{b}^{(l)} \rangle_0\}_{l=1}^{L}$, and sends $\{\langle \mathbf{W}^{(l)} \rangle_1, \langle \mathbf{b}^{(l)} \rangle_1\}_{l=1}^{L}$ to $P_1$. 
To avoid the $O(K)$ overhead associated with secure array indexing, $P_1$ encodes the verification target as a difference vector $\mathbf{d} = \mathbf{e}_\tcls - \mathbf{e}_j \in \{-1, 0, 1\}^K$ (where $\mathbf{e}_\tcls$ and $\mathbf{e}_j$ are one-hot vectors). $P_1$ then secret-shares these encoded inputs, retaining $(\langle \mathbf{x}_0 \rangle_1, \langle \epsilon \rangle_1, \langle \mathbf{d} \rangle_1)$ and sending $(\langle \mathbf{x}_0 \rangle_0, \langle \epsilon \rangle_0, \langle \mathbf{d} \rangle_0)$ to $P_0$.

\textbf{Threat Model.} 
We consider the \emph{semi-honest} (honest-but-curious) adversarial model, where both parties follow the protocol specification but may attempt to infer the other party's private input from their view. The protocol operates in a preprocessing model with a trusted dealer $\mathcal{D}$ that distributes correlated randomness (Beaver triples, FSS keys) during an offline phase. The preprocessing depends only on public parameters, so $\mathcal{D}$ learns nothing about private inputs. We assume $\mathcal{D}$ does not collude with either party. In practice, $\mathcal{D}$ can be instantiated via trusted hardware or replaced by a two-party preprocessing protocol.

\textbf{Problem Statement.}
Given secret shares of a neural network $f$ and a verification query $(\mathbf{x}_0, \epsilon, \mathbf{d})$ held by the two parties (where $\tcls$ and $j$ are encoded in the difference vector $\mathbf{d}$), design a 2PC protocol that enables $P_1$ to determine whether the network is certifiably $\ell_\infty$-robust at $(\mathbf{x}_0, \epsilon)$ for the margin between classes $\tcls$ and $j$. The parties jointly compute secret shares of the certified lower bound $\underline{f}_{\tcls,j}$ (Eq.~\eqref{eq:final_lb}). Upon completion, $P_0$ sends $\langle \underline{f}_{\tcls,j} \rangle_0$ to $P_1$, who reconstructs $\underline{f}_{\tcls,j} = \langle \underline{f}_{\tcls,j} \rangle_0 + \langle \underline{f}_{\tcls,j} \rangle_1$ and concludes robust if $\underline{f}_{\tcls,j} > 0$, {or} unknown otherwise.

While the problem formulation above is general, our current protocol instantiates it for fully connected ReLU networks; extensions to other architectures are discussed in Appendix~\ref{app:extensions}.

\textbf{Design Goals.}
\name{} targets two objectives:
\textbf{(1) Input and Model Privacy:} The protocol satisfies semi-honest security. $P_0$ learns nothing about $(\mathbf{x}_0, \epsilon, \mathbf{d})$ beyond public parameters, and $P_1$ learns only $\underline{f}_{\tcls,j}$. All intermediate quantities remain secret-shared throughout.
\textbf{(2) Faithful Verification Semantics:} The protocol preserves the decision behavior of the plaintext verifier (Section~\ref{subsec:nn_verification}). Any numerical error introduced by fixed-point arithmetic must be bounded to prevent unsound certifications.

\textbf{Security Definition.} 
We formalize semi-honest security in the real/ideal paradigm. Let $\mathsf{inp}_0 = \{\mathbf{W}^{(l)}, \mathbf{b}^{(l)}\}_{l=1}^{L}$ and $\mathsf{inp}_1 = (\mathbf{x}_0, \epsilon, \mathbf{d})$ denote the private inputs of $P_0$ and $P_1$, respectively. The ideal functionality is defined as $\mathcal{F}(\mathsf{inp}_0, \mathsf{inp}_1) = (\bot, \underline{f}_{\tcls,j})$, where $P_0$ receives no output ($\bot$) and $P_1$ receives the certified lower bound.

\begin{definition}[Semi-Honest Security]
\label{def:semihonest_security}
Protocol $\Pi$ securely computes $\mathcal{F}$ if for each $i \in \{0, 1\}$, there exists a probabilistic polynomial-time (PPT) simulator $\mathsf{Sim}_i$ such that:
\begin{equation}
    \left\{ \mathsf{Sim}_i(1^\lambda, \mathsf{inp}_i, \mathcal{F}_i(\mathsf{inp}_0, \mathsf{inp}_1)) \right\} \stackrel{c}{\equiv} \left\{ \mathsf{View}^{\Pi}_i(1^\lambda, \mathsf{inp}_0, \mathsf{inp}_1) \right\}
\end{equation}
where $\mathsf{View}^{\Pi}_i$ denotes the view of $P_i$ (input, randomness, received messages), $\mathcal{F}_i$ denotes the output to $P_i$, and $\stackrel{c}{\equiv}$ denotes computational indistinguishability.
\end{definition}

\section{Methodology}
\label{sec:SecureCROWN}

\subsection{Overview}
\label{subsec:overview}

\textbf{Challenges.}
Implementing the robustness verification (Section~\ref{subsec:nn_verification}) in MPC presents two fundamental challenges.
\emph{(1) Data-dependent branching:} The relaxation slope $\alpha$ depends on neuron activation state—inactive ($\bar{z} \leq 0$), active ($\underline{z} \geq 0$), or unstable—each requiring distinct computation (Eq.~\eqref{eq:relu_relax}). Similarly, intercept terms $\underline{\boldsymbol{\delta}}$ and $\bar{\boldsymbol{\delta}}$ (Eqs.~\eqref{eq:c_lb_update}--\eqref{eq:c_ub_update}) depend on coefficient signs. In MPC, such data-dependent branching requires costly secure comparison, preventing efficient batching across neurons.
\emph{(2) Precision sensitivity in deep
recursion:} The backward recursion in Eq.~\eqref{eq:A_update} updates the coefficient matrix $\mathbf{A}^{(t)}$ across layers, with relaxation slopes $\boldsymbol{\alpha}^{(t)}$ computed via division (Eq.~\eqref{eq:relu_relax}). In fixed-point MPC, division has limited precision, and truncation errors accumulate through the propagation chain, potentially compromising verification accuracy in deep networks.

\textbf{Our Approach.}
We address these challenges through two key techniques.
\emph{(1) Branch Elimination via ReLU:} We observe that all branching 
conditions in the verifier are fundamentally sign tests. Since ReLU 
inherently encodes sign information, we reformulate conditional logic 
into unified arithmetic expressions where ReLU outputs naturally select 
the correct terms. This eliminates branching entirely, enabling identical 
computation paths for all neurons and facilitating efficient vectorized 
execution in MPC. These ReLU operations are efficiently realized via 
FSS-based protocols with constant-round communication.
\emph{(2) Newton--Raphson Refinement:} To address precision loss in secure division, we apply Newton--Raphson iterations to refine reciprocal estimates, approximately doubling the precision bits per iteration at minimal additional cost.

\textbf{Protocol Overview.}
The protocol operates in the preprocessing model: correlated randomness is generated offline based on the public network architecture, reducing the online phase to local computations and share exchanges.
Specifically, in the \textbf{offline phase}, the trusted dealer $\mathcal{D}$ generates correlated randomness based on the network architecture $\{d_0, \ldots, d_L\}$, including Beaver triples for multiplication, FSS keys with associated input masks for function evaluations, and other correlated randomness required by the sub-protocols. This phase is data-independent, so $\mathcal{D}$ learns nothing about private inputs.
The \textbf{online phase} begins with \textbf{input sharing}, where the parties secret-share their private inputs as described in Section~\ref{sec:problem}. 
Next, in the \textbf{secure computation} phase, the parties jointly execute $\Pi_{\mathrm{Verify}}$ (Algorithm~\ref{alg:verify}): for each hidden layer $l \in \{1, \ldots, L-1\}$, compute pre-activation bounds $\langle \underline{\mathbf{z}}^{(l)} \rangle, \langle \bar{\mathbf{z}}^{(l)} \rangle$ via $\Pi_{\mathrm{Backward}}$ (Algorithm~\ref{alg:lbp}) and derive relaxation slopes $\langle \boldsymbol{\alpha}^{(l)} \rangle$ via $\Pi_{\boldsymbol{\alpha}}$; then execute a final backward pass using $\langle \mathbf{d} \rangle$ to compute $\langle \underline{f}_{\tcls,j} \rangle$ (Eq.~\eqref{eq:final_lb}).
Finally, in the \textbf{output} phase, $P_0$ sends $\langle \underline{f}_{\tcls,j} \rangle_0$ to $P_1$, who reconstructs the certified lower bound $\underline{f}_{\tcls,j}$ and concludes local robustness if $\underline{f}_{\tcls,j} > 0$.

\subsection{Secure Building Blocks}
\label{subsec:building_blocks}

\textbf{Secure Multiplication ($\mathsf{SecMul}$, $\mathsf{SecMatMul}$).}
Secure scalar multiplication $\mathsf{SecMul}(\langle x \rangle, \langle y \rangle) \rightarrow \langle xy \rangle$ follows the standard Beaver triple protocol (Section~\ref{subsec:2pc}). For matrix multiplication, given inputs $\mathbf{X} \in \mathbb{Z}_{2^k}^{m \times n}$ and $\mathbf{Y} \in \mathbb{Z}_{2^k}^{n \times p}$, the protocol computes $\mathsf{SecMatMul}(\langle \mathbf{X} \rangle, \langle \mathbf{Y} \rangle) \rightarrow \langle \mathbf{XY} \rangle$ by consuming matrix-level Beaver triples $(\langle \mathbf{A} \rangle, \langle \mathbf{B} \rangle, \langle \mathbf{AB} \rangle)$, thereby avoiding the overhead of scalar-wise decomposition.

\textbf{Secure Arithmetic Right Shift ($\mathsf{SecARS}$).}
To realign the radix point after fixed-point multiplications—which inherently yield $2n_f$ fractional bits—we must restore the standard $n_f$-bit precision. We achieve this via $\mathsf{SecARS}(\langle x \rangle, n_f) \rightarrow \langle \lfloor x / 2^{n_f} \rfloor \rangle$, efficiently instantiated using the DCF-based protocol of~\cite{Gupta2025shark}.

\textbf{Secure ReLU and Absolute Value ($\mathsf{SecReLU}$, $\mathsf{SecAbs}$).}
Both primitives are built on securely computing the sign indicator $\mu = \mathbb{I}[x < 0]$ via DCF. Parties evaluate preprocessed DCF keys to obtain the secret-shared bit $\langle \mu \rangle$, and subsequently compute $\mathsf{SecReLU}(\langle x \rangle) \rightarrow \mathsf{SecMul}(\langle 1 - \mu \rangle, \langle x \rangle)$ and $\mathsf{SecAbs}(\langle x \rangle) \rightarrow \mathsf{SecMul}(\langle 1 - 2\mu \rangle, \langle x \rangle)$.
Each primitive strictly requires one DCF evaluation followed by one $\mathsf{SecMul}$ operation.

\textbf{Secure Reciprocal ($\mathsf{SecRecip}$).}
We implement the reciprocal protocol $\mathsf{SecRecip}(\langle x \rangle) \rightarrow \langle 1/x \rangle$ in two stages. First, we adapt the FSS-based approach from~\cite{Gupta2025shark} to generate an initial approximation. The input magnitude is decomposed as $|x| \approx (1 + \frac{m}{2^{\ell_m}}) \cdot 2^e$, and a seed $w_0 \approx 1/|x|$ is computed via $w_0 = 2^{-e} \cdot \frac{2^{\ell_m}}{2^{\ell_m} + m}$, where the terms are retrieved via spline and lookup tables. Second, since the lookup-based initialization provides limited precision, we perform $n_{\mathrm{iter}}$ Newton--Raphson iterations $w_{i+1} = w_i \cdot (2 - |x| \cdot w_i)$ to refine the result to the target precision. Each iteration roughly doubles the number of accurate bits, ensuring sufficient numerical stability for subsequent verification steps.

\subsection{Secure Slope Computation \texorpdfstring{$\Pi_{\bm{\alpha}}$}{Pi-alpha}}
\label{subsec:alpha}

The relaxation slope vector $\boldsymbol{\alpha}^{(l)}$ in Eq.~\eqref{eq:relu_relax} is essential for constructing the linear propagation matrices in LBP. Although the definition is simple, securely evaluating the piecewise logic in Eq.~\eqref{eq:relu_relax} is costly in MPC, as it requires secure comparisons and data-dependent branching per neuron to distinguish active, inactive, and unstable states. To circumvent these overheads and enable efficient execution, we reformulate the piecewise logic into a unified, data-independent arithmetic expression applied element-wise across all neurons in the layer:
\begin{equation}\label{eq:alpha_unified}
\boldsymbol{\alpha} = \frac{\text{ReLU}(\bar{\mathbf{z}})}{\text{ReLU}(\bar{\mathbf{z}}) + \text{ReLU}(-\underline{\mathbf{z}}) + \epsilon_s}.
\end{equation}
This formulation naturally encapsulates the three theoretical cases within a single data-independent execution path: for inactive neurons where $\bar{z} \leq 0$, the vanishing numerator $\text{ReLU}(\bar{z}) = 0$ yields a slope of $\alpha = 0$; for active neurons where $\underline{z} \geq 0$, the term $\text{ReLU}(-\underline{z}) = 0$ results in $\alpha \approx 1$; and for unstable neurons, the expression simplifies to the optimal chord slope $\bar{z}/(\bar{z} - \underline{z})$. 
To ensure numerical robustness during the MPC execution, we introduce a small public stability constant $\epsilon_s$ into the denominator of Eq.~\eqref{eq:alpha_unified}. This safeguard prevents potential division-by-zero errors that may arise when pre-activation bounds $[\underline{z}, \overline{z}]$ vanish near zero---a common phenomenon in deep neural networks due to the accumulation of conservative bounds during propagation.
{A theoretical analysis of its impact on the certified bounds is provided in Appendix~\ref{appendix:error-analysis}.}
The detailed vectorized protocol $\Pi_{\boldsymbol{\alpha}}$, which achieves branching-free computation using $\mathsf{SecReLU}$, $\mathsf{SecRecip}$, and $\mathsf{SecMul}$ primitives, is presented in Algorithm~\ref{alg:alpha} in Appendix~\ref{app:algorithms}.

\subsection{Secure Intercept Accumulation \texorpdfstring{$\Pi_{\boldsymbol{\delta}}$}{Pi-delta}}

\label{subsec:bias}

As established in Eqs.~\eqref{eq:c_lb_update}--\eqref{eq:c_ub_update}, the intercept contribution vectors $\underline{\boldsymbol{\delta}}^{(t-1)}, \bar{\boldsymbol{\delta}}^{(t-1)} \in \mathbb{R}^{d_l}$ depend on the element-wise signs of the intermediate coefficient matrix $\hat{\mathbf{A}}^{(t)} \in \mathbb{R}^{d_l \times d_{t-1}}$. From Eq.~\eqref{eq:relu_relax}, the lower relaxation intercept is $\underline{\beta}^{(t-1)}_j = 0$ for all neurons, while the upper intercept is $\bar{\beta}^{(t-1)}_j = -\alpha^{(t-1)}_j \underline{z}^{(t-1)}_j$ for unstable neurons (and zero otherwise).
Since $\underline{\beta}^{(t-1)}_j = 0$, the accumulations simplify significantly: only terms involving $\bar{\beta}$ contribute. By substituting $A^{(t-1)}_{ij} = \hat{A}^{(t)}_{ij} \alpha^{(t-1)}_j$ (Eq.~\eqref{eq:A_update}) and noting that $\alpha^{(t-1)}_j \geq 0$ preserves signs, the $i$-th elements of the lower and upper contributions reduce to $\underline{\delta}^{(t-1)}_i = \sum_{j:\, A^{(t-1)}_{ij} \leq 0} A^{(t-1)}_{ij} (-\underline{z}^{(t-1)}_j)$ and $\bar{\delta}^{(t-1)}_i = \sum_{j:\, A^{(t-1)}_{ij} > 0} A^{(t-1)}_{ij} (-\underline{z}^{(t-1)}_j)$, respectively.

To avoid expensive secure comparisons in MPC, these conditional summations can be elegantly reformulated into unified matrix-vector multiplications using ReLU operations:
\begin{align}
\underline{\boldsymbol{\delta}}^{(t-1)} &= -\text{ReLU}(-\mathbf{A}^{(t-1)}) \text{ReLU}(-\underline{\mathbf{z}}^{(t-1)}), \label{eq:delta_lower_unified} \\
\bar{\boldsymbol{\delta}}^{(t-1)} &= \text{ReLU}(\mathbf{A}^{(t-1)}) \text{ReLU}(-\underline{\mathbf{z}}^{(t-1)}), \label{eq:delta_upper_unified}
\end{align}
The protocol $\Pi_{\boldsymbol{\delta}}$ is given in Algorithm~\ref{alg:delta} (Appendix~\ref{app:algorithms}).

\begin{algorithm}[t]
\algtext*{EndFor}
\caption{$\Pi_{\mathrm{Backward}}$: Secure Linear Bound Propagation}
\label{alg:lbp}
\begin{algorithmic}[1]
\Require Target layer $l$; $\{\langle \boldsymbol{\alpha}^{(t)} \rangle, \langle \underline{\mathbf{z}}^{(t)} \rangle\}_{t=1}^{l-1}$; $\langle \mathbf{x}_0 \rangle$; $\langle \epsilon \rangle$
\Ensure Concrete bounds $\langle \underline{\mathbf{z}}^{(l)} \rangle, \langle \bar{\mathbf{z}}^{(l)} \rangle$
\State $\langle \mathbf{A} \rangle \gets \langle \mathbf{I}_{d_l} \rangle$; \quad $\langle \underline{\mathbf{c}} \rangle \gets \langle \mathbf{0} \rangle$; \quad $\langle \bar{\mathbf{c}} \rangle \gets \langle \mathbf{0} \rangle$
\For{$t = l, \ldots, 2$}
    \State $\langle \hat{\mathbf{A}} \rangle \gets \mathsf{SecMatMul}(\langle \mathbf{A} \rangle, \langle \mathbf{W}^{(t)} \rangle)$ \hfill $\triangleright$ Eq.~\eqref{eq:Ahat_update}
    \State $\langle \mathbf{A} \rangle \gets \mathsf{SecMatMul}(\langle \hat{\mathbf{A}} \rangle, \langle \operatorname{diag}({\boldsymbol{\alpha}}^{(t-1)}) \rangle)$
    \Statex \hfill $\triangleright$ Eq.~\eqref{eq:A_update}
    \State $\langle \underline{\boldsymbol{\delta}} \rangle, \langle \bar{\boldsymbol{\delta}} \rangle \gets \Pi_{\boldsymbol{\delta}}(\langle \mathbf{A} \rangle, \langle \underline{\mathbf{z}}^{(t-1)} \rangle)$ \hfill $\triangleright$ Eqs.~\eqref{eq:delta_lower_unified}--\eqref{eq:delta_upper_unified}
    \State $\langle \underline{\mathbf{c}} \rangle \gets \langle \underline{\mathbf{c}} \rangle + \mathsf{SecMatMul}(\langle \mathbf{A} \rangle, \langle \mathbf{b}^{(t)} \rangle) + \langle \underline{\boldsymbol{\delta}} \rangle$
    \Statex \hfill $\triangleright$ Eq.~\eqref{eq:c_lb_update}
    \State $\langle \bar{\mathbf{c}} \rangle \gets \langle \bar{\mathbf{c}} \rangle + \mathsf{SecMatMul}(\langle \mathbf{A} \rangle, \langle \mathbf{b}^{(t)} \rangle) + \langle \bar{\boldsymbol{\delta}} \rangle$
    \Statex \hfill $\triangleright$ Eq.~\eqref{eq:c_ub_update}
\EndFor
\State $\langle \mathbf{A}^{(0)} \rangle \gets \mathsf{SecMatMul}(\langle \mathbf{A} \rangle, \langle \mathbf{W}^{(1)} \rangle)$
\State $\langle \underline{\mathbf{c}}^{(0)} \rangle \gets \langle \underline{\mathbf{c}} \rangle + \mathsf{SecMatMul}(\langle \mathbf{A} \rangle, \langle \mathbf{b}^{(1)} \rangle)$
\State $\langle \bar{\mathbf{c}}^{(0)} \rangle \gets \langle \bar{\mathbf{c}} \rangle + \mathsf{SecMatMul}(\langle \mathbf{A} \rangle, \langle \mathbf{b}^{(1)} \rangle)$
\State $\langle \mathbf{r} \rangle \gets \mathrm{RowSum}(\mathsf{SecAbs}(\langle \mathbf{A}^{(0)} \rangle))$
\Statex \hfill $\triangleright$ Row-wise $\ell_1$-norm
\State $\langle \mathbf{m} \rangle \gets \mathsf{SecMatMul}(\langle \mathbf{A}^{(0)} \rangle, \langle \mathbf{x}_0 \rangle)$;
\Statex  $\langle \mathbf{s} \rangle \gets \mathsf{SecMul}(\langle \epsilon \rangle, \langle \mathbf{r} \rangle)$
\State $\langle \underline{\mathbf{z}}^{(l)} \rangle \gets \langle \mathbf{m} \rangle + \langle \underline{\mathbf{c}}^{(0)} \rangle - \langle \mathbf{s} \rangle$;
\Statex  $\langle \bar{\mathbf{z}}^{(l)} \rangle \gets \langle \mathbf{m} \rangle + \langle \bar{\mathbf{c}}^{(0)} \rangle + \langle \mathbf{s} \rangle$
\State \Return $\langle \underline{\mathbf{z}}^{(l)} \rangle, \langle \bar{\mathbf{z}}^{(l)} \rangle$
\end{algorithmic}
\end{algorithm}

\begin{algorithm}[!htp]
\algtext*{EndFor}
\caption{$\Pi_{\mathrm{Verify}}$: Secure Robustness Verification}
\label{alg:verify}
\begin{algorithmic}[1]
\Require Shares $\langle \mathbf{W}^{(l)} \rangle$, $\langle \mathbf{b}^{(l)} \rangle$ for $l \in [L]$; $\langle \mathbf{x}_0 \rangle$; $\langle \epsilon \rangle$; $\langle \mathbf{d} \rangle$
\Ensure Certified lower bound $\langle \underline{f}_{y,j} \rangle$
\Statex \textbf{/* Iterative Bound and Slope Computation */}
\State $\langle \mathbf{z}^{(1)} \rangle \gets \mathsf{SecMatMul}(\langle \mathbf{W}^{(1)} \rangle, \langle \mathbf{x}_0 \rangle) + \langle \mathbf{b}^{(1)} \rangle$
\State $\langle \mathbf{r}^{(1)} \rangle \gets \mathrm{RowSum}(\mathsf{SecAbs}(\langle \mathbf{W}^{(1)} \rangle))$ \hfill $\triangleright$ Row-wise $\ell_1$-norm
\State $\langle \bar{\mathbf{z}}^{(1)} \rangle \gets \langle \mathbf{z}^{(1)} \rangle + \mathsf{SecMul}(\langle \epsilon \rangle, \langle \mathbf{r}^{(1)} \rangle)$
\State $\langle \underline{\mathbf{z}}^{(1)} \rangle \gets \langle \mathbf{z}^{(1)} \rangle - \mathsf{SecMul}(\langle \epsilon \rangle, \langle \mathbf{r}^{(1)} \rangle)$
\State $\langle \boldsymbol{\alpha}^{(1)} \rangle \gets \Pi_{\alpha}(\langle \bar{\mathbf{z}}^{(1)} \rangle, \langle \underline{\mathbf{z}}^{(1)} \rangle)$
\For{$l = 2, \ldots, L-1$}
    \State $\langle \underline{\mathbf{z}}^{(l)} \rangle, \langle \bar{\mathbf{z}}^{(l)} \rangle \gets \Pi_{\mathrm{Backward}}(l, \{\langle \boldsymbol{\alpha}^{(t)} \rangle, \langle \underline{\mathbf{z}}^{(t)} \rangle\}_{t=1}^{l-1})$
    \State $\langle \boldsymbol{\alpha}^{(l)} \rangle \gets \Pi_{\alpha}(\langle \bar{\mathbf{z}}^{(l)} \rangle, \langle \underline{\mathbf{z}}^{(l)} \rangle)$
\EndFor
\Statex \textbf{/* Margin Certification */}
\State $\langle \mathbf{W}^{(L)} \rangle \gets \mathsf{SecMatMul}(\langle \mathbf{d} \rangle^\top, \langle \mathbf{W}^{(L)} \rangle)$
\State $\langle b^{(L)} \rangle \gets \mathsf{SecMatMul}(\langle \mathbf{d} \rangle^\top, \langle \mathbf{b}^{(L)} \rangle)$
\State $\langle \underline{f}_{y,j} \rangle, \_ \gets \Pi_{\mathrm{Backward}}(L, \{\langle \boldsymbol{\alpha}^{(t)} \rangle, \langle \underline{\mathbf{z}}^{(t)} \rangle\}_{t=1}^{L-1})$
\State \Return $\langle \underline{f}_{y,j} \rangle$
\end{algorithmic}
\end{algorithm}

\subsection{Secure Verification Protocol}
\label{subsec:protocol}
Building on the sub-protocols $\Pi_{\boldsymbol{\alpha}}$ and $\Pi_{\boldsymbol{\delta}}$, we now present the complete secure verification protocol. Algorithm~\ref{alg:lbp} ($\Pi_{\mathrm{Backward}}$) securely implements the backward recursion (Eqs.~\eqref{eq:Ahat_update}--\eqref{eq:c_ub_update}) for a target layer $l$. 
Starting with $\langle \mathbf{A} \rangle = \langle \mathbf{I}_{d_l} \rangle$ and $\langle \underline{\mathbf{c}} \rangle = \langle \bar{\mathbf{c}} \rangle = \langle \mathbf{0} \rangle$, the protocol iterates for $t = l, \ldots, 2$, performing coefficient propagation ($\mathsf{SecMatMul}$), slope multiplication ($\mathsf{SecMul}$), intercept accumulation ($\Pi_{\boldsymbol{\delta}}$), and bias addition. 
Final bounds follow from H\"older's inequality (Eqs.~\eqref{eq:z_lb}--\eqref{eq:z_ub}), with the $\ell_1$-norm computed via $\mathsf{SecAbs}$ and local summation (for $\ell_\infty$-perturbations).

\textbf{Iterative Bound and Slope Computation.}
As described in Section~\ref{subsec:nn_verification}, computing bounds at layer $l$ requires slopes from all preceding layers, so the parties compute bounds and slopes iteratively from $l = 1$ to $L-1$. For $l = 1$, bounds are computed directly: the linear term $\langle \mathbf{W}^{(1)} \mathbf{x}_0 \rangle$ via $\mathsf{SecMatMul}$ and the row-wise norm via $\mathsf{SecAbs}$; slopes $\langle \boldsymbol{\alpha}^{(1)} \rangle$ then follow from $\Pi_{\boldsymbol{\alpha}}$. For $l \geq 2$, the parties invoke $\Pi_{\mathrm{Backward}}$ with previously computed slope shares to obtain $\langle \underline{\mathbf{z}}^{(l)} \rangle, \langle \bar{\mathbf{z}}^{(l)} \rangle$, and then derive $\langle \boldsymbol{\alpha}^{(l)} \rangle$ via $\Pi_{\boldsymbol{\alpha}}$.

\textbf{Margin Certification.}
After obtaining all slope shares $\langle \boldsymbol{\alpha}^{(1)} \rangle, \ldots, \langle \boldsymbol{\alpha}^{(L-1)} \rangle$, the parties compute the certified margin $\langle \underline{f}_{\tcls,j} \rangle$ via a final backward pass. To avoid $O(K)$ overhead from secure array indexing, $P_1$ provides a difference vector $\langle \mathbf{d} \rangle$ where $\mathbf{d} = \mathbf{e}_\tcls - \mathbf{e}_j \in \{-1, 0, 1\}^K$. The backward pass initializes $\langle \mathbf{W}^{(L)} \rangle \leftarrow \mathsf{SecMatMul}(\langle \mathbf{d} \rangle^\top, \langle \mathbf{W}^{(L)} \rangle) \in \mathbb{R}^{1 \times d_{L-1}}$ and $\langle b^{(L)} \rangle \leftarrow \mathsf{SecMul}(\langle \mathbf{d} \rangle^\top, \langle \mathbf{b}^{(L)} \rangle) \in \mathbb{R}$. This reduces the coefficient $\mathbf{A}^{(t)}$ to a row vector in $\mathbb{R}^{1 \times d_t}$, lowering per-layer communication from $O(d^2)$ to $O(d)$. The parties then invoke $\Pi_{\mathrm{Backward}}$ with target layer $L$ to obtain $\langle \mathbf{A}^{(0)} \rangle$ and $\langle \underline{c}^{(0)} \rangle$, from which the certified margin $\langle \underline{f}_{\tcls,j} \rangle$ follows via Eq.~\eqref{eq:final_lb}.

\section{Theoretical Analysis}
\label{sec:analysis}

\textbf{Security.}
The security of \name{} is analyzed under the semi-honest adversary model (Definition~\ref{def:semihonest_security}). The proof constructs a simulator via a hybrid argument over $L$ layers, reducing security to the underlying FSS and ASS primitives.
The complete proof of Theorem~\ref{thm:security} is provided in Appendix~\ref{app:security_proof}.
{A pathway for upgrading to malicious security, together with an empirical cost estimate, is discussed in Appendix~\ref{app:malicious}.}
\begin{theorem}[Security]\label{thm:security}
Protocol $\Pi_{\mathrm{verify}}$ securely computes $\mathcal{F}_{\mathrm{verify}}$ in the $\mathcal{F}_{\mathcal{D}}$-hybrid model against semi-honest adversaries.
\end{theorem}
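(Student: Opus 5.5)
We prove Theorem~\ref{thm:security} by the standard simulation argument in the $\mathcal{F}_{\mathcal{D}}$-hybrid model, combined with the sequential composition theorem for semi-honest protocols. The first step is to view $\Pi_{\mathrm{Verify}}$ (Algorithm~\ref{alg:verify}) as a fixed, data-independent sequence of calls to a small set of sub-functionalities: $\mathcal{F}_{\mathsf{SecMul}}$, $\mathcal{F}_{\mathsf{SecMatMul}}$, $\mathcal{F}_{\mathsf{SecARS}}$, $\mathcal{F}_{\mathsf{SecReLU}}$, $\mathcal{F}_{\mathsf{SecAbs}}$ and $\mathcal{F}_{\mathsf{SecRecip}}$. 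Local linear operations are interleaved between these calls. The call sequence is data-independent because the branch elimination of Eqs.~\eqref{eq:alpha_unified}, \eqref{eq:delta_lower_unified} and~\eqref{eq:delta_upper_unified} fixes the control flow. Its length and shapes depend only on the public architecture $\{d_l\}$, on $n_{\mathrm{iter}}$, and on $\epsilon_s$. Each sub-functionality takes additive shares as input and returns fresh, uniformly random additive shares of its output.

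\textbf{Sub-protocol security.} We then show that each sub-protocol securely realizes its functionality given correlated randomness from $\mathcal{D}$.
\begin{itemize}[leftmargin=15pt, nosep]
    \item For Beaver multiplication, the opened values $e=x-a$ and $d=y-b$ are uniformly distributed in $\mathbb{Z}_{2^k}$ (respectively $\mathbb{Z}_{2^k}^{m\times n}$) because the triple components are fresh and uniform. The simulator therefore samples them uniformly, and this simulation is perfect.
    \item For the FSS-based primitives (DCF-based sign bit, $\mathsf{SecARS}$, and the spline and lookup stage of $\mathsf{SecRecip}$), the only messages received are the masked values $\hat{x}=x+r_{\mathrm{in}}$ and the other party's output shares. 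The masked values are uniform because $r_{\mathrm{in}}$ is uniform and unknown to the party. Each party's FSS key is simulatable by the FSS security property of~\citet{BoyleGI15}, which gives computational indistinguishability.
    \item $\mathsf{SecReLU}$ and $\mathsf{SecAbs}$ compose one DCF with one $\mathsf{SecMul}$.
    \item The Newton--Raphson stage of $\mathsf{SecRecip}$ is $n_{\mathrm{iter}}$ rounds of $\mathsf{SecMul}$ followed by $\mathsf{SecARS}$, using the public constant $2$.
    \item $\Pi_{\boldsymbol{\alpha}}$ and $\Pi_{\boldsymbol{\delta}}$ are again fixed compositions of these primitives.
\end{itemize}
By the composition theorem, this reduces the analysis to the $(\mathcal{F}_{\mathsf{SecMul}},\ldots,\mathcal{F}_{\mathsf{SecRecip}})$-hybrid model.

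\textbf{Simulators in the hybrid model.}
\begin{itemize}[leftmargin=15pt, nosep]
    \item $\mathsf{Sim}_0$ receives $\mathsf{inp}_0=\{\mathbf{W}^{(l)},\mathbf{b}^{(l)}\}$ and no output. It produces $P_0$'s shares $\langle\mathbf{x}_0\rangle_0,\langle\epsilon\rangle_0,\langle\mathbf{d}\rangle_0$ as uniform ring elements of the public dimensions, and it answers every hybrid call with fresh uniform shares. These are distributed identically to the real ones, since every share $P_0$ holds is masked by independent uniform randomness. In the output phase $P_0$ only sends a message, so nothing further is received.
    \item $\mathsf{Sim}_1$ additionally receives the output $\underline{f}_{\tcls,j}$. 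It simulates the model-share message $\{\langle\mathbf{W}^{(l)}\rangle_1,\langle\mathbf{b}^{(l)}\rangle_1\}$ as uniform, and all intermediate hybrid outputs as uniform shares. In the final step it sets the received message to $\underline{f}_{\tcls,j}-\langle\underline{f}_{\tcls,j}\rangle_1$, using the last simulated share. This is consistent with the real view because the final output share is the last fresh share produced by the hybrid functionalities, so conditioned on the rest of the view, $P_0$'s share is determined exactly by the output.
\end{itemize}

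\textbf{Hybrid argument and main obstacle.} We organize the proof as a hybrid argument layer by layer, with one group of hybrids for $l=1,\ldots,L-1$ and one for the margin pass. Each hybrid replaces the real execution of one group of sub-protocols by its simulation, and adjacent hybrids are indistinguishable by the sub-protocol lemmas above. The number of hybrids is polynomial in $L$ and $\max_l d_l$. The main obstacle is freshness of randomness across the many reused shares. Slopes $\langle\boldsymbol{\alpha}^{(t)}\rangle$ and bounds $\langle\underline{\mathbf{z}}^{(t)}\rangle$ are consumed repeatedly in later $\Pi_{\mathrm{Backward}}$ calls, and a Beaver triple or FSS mask reused across these calls would correlate opened values and break uniformity. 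We will argue that $\mathcal{D}$ issues an independent triple or key (and mask) for every invocation. The number of invocations is a public function of the architecture, so $\mathcal{D}$ can provision them in advance. With this, every opened value is masked by fresh randomness and each hybrid step is a clean reduction. A second subtlety to address is that fixed-point truncation errors and the constant $\epsilon_s$ change the function actually computed. We handle this by defining $\mathcal{F}_{\mathrm{verify}}$ as the fixed-point computation itself, so that security concerns only the computed value; its accuracy is a separate question deferred to Appendix~\ref{appendix:error-analysis}.
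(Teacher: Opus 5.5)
Your proposal is correct and uses the same two ingredients as the paper's proof---FSS key privacy and the uniformity of additive shares---but it decomposes the argument differently. The paper runs one layer-indexed hybrid sequence $\mathbf{H}_0,\ldots,\mathbf{H}_{L+1}$ and reduces each step directly to the security of ``the FSS scheme'' of that layer. Its last step swaps $P_1$'s inputs for dummies. It builds only $\mathsf{Sim}_0$ and declares $P_1$ symmetric. You instead decompose by primitive: $\mathsf{SecMul}$ and $\mathsf{SecMatMul}$ are simulated perfectly via uniform Beaver openings, and the DCF-based sign, $\mathsf{SecARS}$ and $\mathsf{SecRecip}$ computationally via FSS key simulation. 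You then use sequential composition, so the top-level simulators only emit fresh uniform shares.

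The paper's route is shorter, but yours makes explicit several things it glosses over:
\begin{itemize}
\item Beaver openings are not FSS evaluations (the paper folds $\mathsf{SecMul}$ into ``the FSS scheme'') and need their own, trivial, simulation.
\item $\mathcal{D}$ must supply independent triples and keys for each of the repeated $\Pi_{\mathrm{Backward}}$ calls that reuse $\langle\boldsymbol{\alpha}^{(t)}\rangle$ and $\langle\underline{\mathbf{z}}^{(t)}\rangle$.
\item $P_1$ is not symmetric to $P_0$. It receives $\langle\underline{f}_{\tcls,j}\rangle_0$, which your $\mathsf{Sim}_1$ correctly programs as $\underline{f}_{\tcls,j}-\langle\underline{f}_{\tcls,j}\rangle_1$.
\end{itemize}
That programming is sound only if the reconstructed value equals the functionality's output exactly. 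So defining $\mathcal{F}_{\mathrm{verify}}$ as the fixed-point computation, with $\epsilon_s$ and the reciprocal approximation built in, is genuinely needed, not cosmetic.

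One detail to state when invoking composition: your sub-functionalities output fresh random shares, so each sub-protocol simulator must produce a view consistent with the output share it is handed, not just uniform openings. In the Beaver case, choose $\langle c\rangle_0$ so that $\langle c\rangle_0+d\langle a\rangle_0+e\langle b\rangle_0+ed$ equals that share. Once composition is in place, the extra layer-by-layer hybrid you describe is redundant.
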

\textbf{Complexity.}
For an $L$-layer network with maximum width $d$, \name{} requires $O(L^2)$ rounds, $O(L^2 d^2 k)$ communication, and $O(L^2 d^3)$ computation (Appendix~\ref{app:complexity}).
\textbf{Error Analysis.}
Fixed-point arithmetic introduces error in $\lb{f}_{\tcls,j}$ bounded by a network-depth-dependent  multiple of the per-operation rounding unit $\varepsilon_q = 2^{-n_f}$(Appendix~\ref{appendix:error-analysis}).
With maximum weight norm $\rho = \max_i\|\mathbf{W}^{(i)}\|_\infty$, 
the absolute error in each output bound entry scales as $O(L)\,\varepsilon_q$, $O(L^2)\,\varepsilon_q$, and $O(\rho^L)\,\varepsilon_q$ for the stable ($\rho<1$), unit-norm ($\rho=1$), and unstable ($\rho>1$) regimes, respectively.

\section{Experimental Evaluation}\label{sec:experiments}

\subsection{Experimental Setup}
\label{subsec:Exsetup}

\textbf{Implementation.}
We implement \name{} in C++, with $k=64$-bit fixed-point representation, $n_f=26$ fractional bits, stability constant $\epsilon_s=10^{-4}$, and $n_{\mathrm{iter}}=1$ Newton--Raphson refinement iteration in $\mathsf{SecRecip}$ (the FSS-based seed provides approximately $\lceil n_f/2 \rceil$ accurate bits, and one iteration doubles this to ${\geq}\, n_f$ bits).
Experiments are run on a machine with an Intel Xeon w7-3445 CPU and 64\,GB of RAM, using four CPU threads.
We simulate network conditions via Linux \texttt{tc}: LAN (10 Gbps bandwidth, 0.05ms round-trip time (RTT)) and WAN (370/600\,Mbps, 40/60\,ms RTT). Results are averaged over 10 trials.
Our replication package is available at \href{https://github.com/songzhibo1/SecureCROWN}{https://github.com/songzhibo1/SecureCROWN}.

\begin{table}[t]
\centering
\setlength{\tabcolsep}{2.5pt}
\renewcommand{\arraystretch}{1.05}
\caption{Fidelity analysis of \name{}. $m \times [n]$: $m$ layers, $n$ neurons each. R: Robust ($\underline{f}_{\tcls,j} > 0$); U: Unknown ($\underline{f}_{\tcls,j} \leq 0$).}
\label{tab:fidelity_final_complete}
\resizebox{\columnwidth}{!}{%
\begin{tabular}{ccccccc}
\toprule
& & \textbf{Avg.} & & & & \\
\textbf{Data} & \textbf{Model} & \textbf{Rob.} & \textbf{$\epsilon$} & \textbf{Result} & \textbf{MRE} & \textbf{Cons.} \\
\midrule
\multirow{21}{*}{\rotatebox{90}{MNIST}} 
& \multirow{4}{*}{\rotatebox{90}{$2\times[20]$}} & \multirow{4}{*}{\rotatebox{90}{0.0448}} 
& $0.015$ & \cellcolor{gray!15}\textbf{All R (99)} & \cellcolor{gray!15}$\mathbf{4.45 \times 10^{-7}}$ & $\mathbf{100\%}$ \\
\cmidrule{4-7}
& & & \multirow{2}{*}{$0.045$} & \cellcolor{gray!15}\textbf{R (43)} & \cellcolor{gray!15}$\mathbf{9.52 \times 10^{-7}}$ & $\mathbf{100\%}$ \\
& & & & U (56) & $8.70 \times 10^{-7}$ & $\mathbf{100\%}$ \\
\cmidrule{4-7}
& & & $0.1$ & All U (99) & $3.27 \times 10^{-7}$ & $\mathbf{100\%}$ \\ 
\cmidrule{2-7}
& \multirow{5}{*}{\rotatebox{90}{$3\times[20]$}} & \multirow{5}{*}{\rotatebox{90}{0.0304}} 
& \multirow{2}{*}{$0.015$} & \cellcolor{gray!15}\textbf{R (93)} & \cellcolor{gray!15}$\mathbf{1.41 \times 10^{-6}}$ & $\mathbf{100\%}$ \\
& & & & U (1) & $1.08 \times 10^{-4}$ & $\mathbf{100\%}$ \\
\cmidrule{4-7}
& & & \multirow{2}{*}{$0.030$} & \cellcolor{gray!15}\textbf{R (44)} & \cellcolor{gray!15}$\mathbf{7.18 \times 10^{-6}}$ & $\mathbf{100\%}$ \\ 
& & & & U (50) & $3.68 \times 10^{-6}$ & $\mathbf{100\%}$ \\
\cmidrule{4-7}
& & & $0.1$ & All U (94) & $6.85 \times 10^{-7}$ & $\mathbf{100\%}$ \\
\cmidrule{2-7}
& \multirow{3}{*}{\rotatebox{90}{$3\times[256]$}} & \multirow{3}{*}{\rotatebox{90}{0.0159}} 
& \multirow{2}{*}{$0.015$} & \cellcolor{gray!15}\textbf{R (49)} & \cellcolor{gray!15}$\mathbf{6.20 \times 10^{-6}}$ & $\mathbf{100\%}$ \\
& & & & U (51) & $1.14 \times 10^{-5}$ & $\mathbf{100\%}$ \\
\cmidrule{4-7}
& & & $0.1$ & All U (100) & $9.29 \times 10^{-7}$ & $\mathbf{100\%}$ \\
\cmidrule{2-7}
& \multirow{3}{*}{\rotatebox{90}{$5\times[256]$}} & \multirow{3}{*}{\rotatebox{90}{0.0149}} 
& \multirow{2}{*}{$0.015$} & \cellcolor{gray!15}\textbf{R (50)} & \cellcolor{gray!15}$\mathbf{7.78 \times 10^{-6}}$ & $\mathbf{100\%}$ \\
& & & & U (50) & $1.21 \times 10^{-5}$ & $\mathbf{100\%}$ \\
\cmidrule{4-7}
& & & $0.1$ & All U (100) & $6.16 \times 10^{-5}$ & $\mathbf{100\%}$ \\
\cmidrule{2-7}
& \multirow{3}{*}{\rotatebox{90}{$7\times[256]$}} & \multirow{3}{*}{\rotatebox{90}{0.0132}} 
& \multirow{2}{*}{$0.015$} & \cellcolor{gray!15}\textbf{R (30)} & \cellcolor{gray!15}$\mathbf{2.66 \times 10^{-5}}$ & $\mathbf{100\%}$ \\
& & & & U (69) & $1.32 \times 10^{-2}$ & $\mathbf{100\%}$ \\
\cmidrule{4-7}
& & & $0.1$ & All U (99) & $9.18 \times 10^{-2}$ & $\mathbf{100\%}$ \\
\midrule
\multirow{7}{*}{\rotatebox{90}{CIFAR-10}} 
& \multirow{3}{*}{\rotatebox{90}{$5\times[100]$}} & \multirow{3}{*}{\rotatebox{90}{0.0021}} 
& \multirow{2}{*}{$0.002$} & \cellcolor{gray!15}\textbf{R (24)} & \cellcolor{gray!15}$\mathbf{5.23 \times 10^{-5}}$ & $\mathbf{100\%}$ \\
& & & & U (21) & $7.78 \times 10^{-5}$ & $\mathbf{100\%}$ \\
\cmidrule{4-7}
& & & $0.0078$ & All U (45) & $5.65 \times 10^{-5}$ & $\mathbf{100\%}$ \\
\cmidrule{2-7}
& \multirow{2}{*}[0.5em]{\rotatebox{90}{$7{\times}[100]$}} & \multirow{2}{*}[0em]{\rotatebox{90}{0.0013}} 
& \multirow{2}{*}[0.1em]{$0.001$} & \cellcolor{gray!15}\rule{0pt}{1.3em}\textbf{R (32)} & \cellcolor{gray!15}$\mathbf{2.82 \times 10^{-4}}$ & $\mathbf{100\%}$ \\
& & & & \rule[-1.2em]{0pt}{0.2em}U (14) & $1.50 \times 10^{-4}$ & $\mathbf{100\%}$ \\
\cmidrule{2-7}
& \multirow{2}{*}[0.5em]{\rotatebox{90}{$10{\times}[200]$}} & \multirow{2}{*}[0em]{\rotatebox{90}{0.0010}}
& \multirow{2}{*}[-0.1em]{$0.001$} & \cellcolor{gray!15}\rule{0pt}{1.3em}\textbf{R (28)} & \cellcolor{gray!15}$\mathbf{1.73 \times 10^{-3}}$ & $\mathbf{100\%}$ \\
& & & & \rule[-1.2em]{0pt}{0.2em}U (24) & $1.67 \times 10^{-4}$ & $\mathbf{100\%}$ \\
\bottomrule
\end{tabular}
}
\end{table}

\begin{figure*}[!htbp]
    \centering
\includegraphics[width=\linewidth]{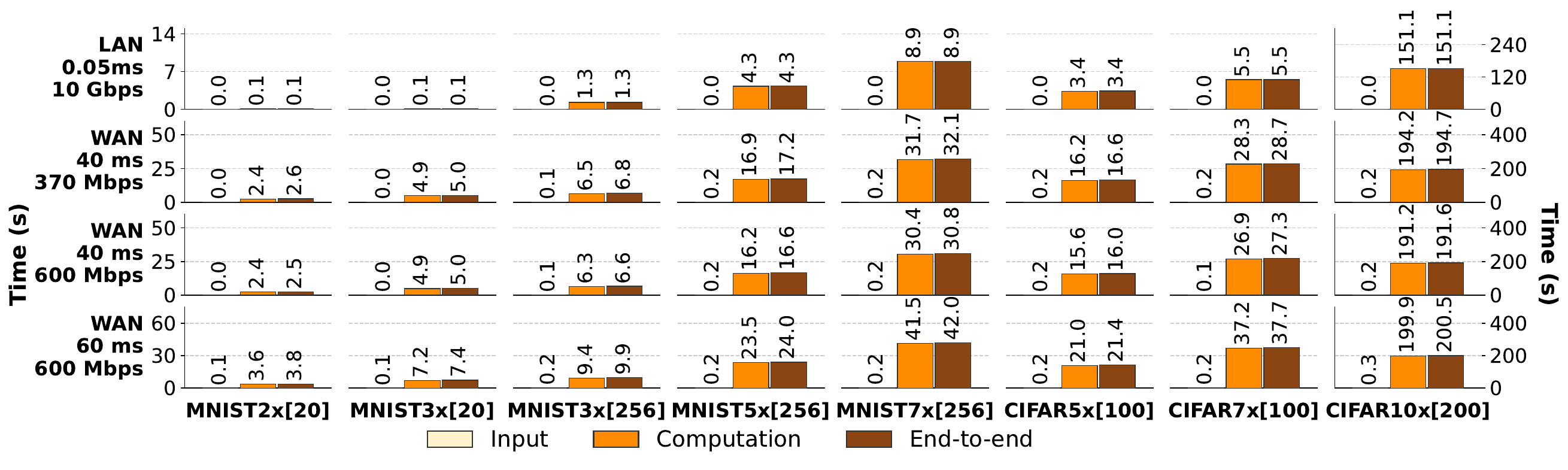}
\caption{{Online latency breakdown across network conditions and model architectures}.}
    \label{fig:time_breakdown}
\end{figure*}

\textbf{Baseline.}
As no prior work addresses privacy-preserving robustness verification, we compare against the plaintext CROWN algorithm (Section~\ref{subsec:nn_verification}), implemented in Python and executed single-threaded. This serves as the correctness oracle for validating numerical fidelity.

\textbf{Models and Datasets.}
We evaluate on MNIST~\citep{lecun1998gradient} and CIFAR-10~\citep{krizhevsky2009learning} using fully-connected ReLU networks. 
MNIST models ($2 \times [20]$ to $7 \times [256]$) are {from~\citep{zhang2018efficient} and VNN-COMP~\citep{brix2023vnncomp}};
CIFAR-10 models ($5 \times [100]$ to $10 \times [200]$) are from ERAN~\citep{singh2019abstract}. For each configuration, we randomly sample 100 test images, verify those that are correctly classified, and test multiple $\epsilon$ values under $\ell_\infty$ perturbations—including values near the robustness boundary (mixed Robust/Unknown outcomes) and extreme values (all Robust or all Unknown).

\subsection{Results}
\label{subsec:results}

\textbf{RQ1: Fidelity.}
We validate \name{}'s correctness by measuring numerical fidelity against the plaintext baseline using two metrics.
The first is \emph{Mean Relative Error} (MRE):
$\text{MRE} = \frac{1}{N} \sum_{i=1}^{N} | (\underline{f}_{\tcls,j}^{\text{sec},(i)} - \underline{f}_{\tcls,j}^{\text{plain},(i)}) / \underline{f}_{\tcls,j}^{\text{plain},(i)} |$,
where $\underline{f}_{\tcls,j}^{\text{sec}}$ and $\underline{f}_{\tcls,j}^{\text{plain}}$ denote the \name{} and plaintext certified lower bounds, respectively. The second is \emph{Verification Consistency}, the percentage of samples with identical verification outcomes.

As shown in Table~\ref{tab:fidelity_final_complete}, \name{} achieves high numerical fidelity, with MRE scaling predictably with network depth. Our evaluation specifically examines challenging cases near the robustness boundary, as reflected by the mixed Robust/Unknown outcomes in Table~\ref{tab:fidelity_final_complete}. For shallow networks (MNIST $2 \times [20]$), MRE remains at $\approx 10^{-7}$. As depth increases, cumulative rounding errors in recursive bound propagation lead to higher MRE---from $6.20 \times 10^{-6}$ (MNIST $3 \times [256]$) to $1.73 \times 10^{-3}$ for CIFAR-10 $10 \times [200]$. Despite this, \name{} maintains \textbf{100\% verification consistency} with the plaintext baseline across all 8 architectures and multiple $\epsilon$ configurations, confirming that our approach faithfully preserves the decision behavior of the plaintext verifier.

\begin{table}[tbp]
\centering
\caption{Online communication volume breakdown (MB).}
\label{tab:comm_breakdown}
\setlength{\tabcolsep}{5pt} %
\begin{tabular}{cc|SSS} 
\toprule
\multirow{2}{*}{\textbf{Dataset}} & \multirow{2}{*}{\textbf{Model}} & \multicolumn{3}{c}{\textbf{Communication (MB)}} \\ 

\cmidrule(lr){3-5}
 & & {\textbf{Input}} & {\textbf{Comp.}} & {\textbf{E2E}} \\
\midrule

\multirow{5}{*}{MNIST}
 & $2 \times [20]$ & 0.13 & 0.64 & 0.77 \\
 & $3 \times [20]$ & 0.14 & 1.78 & 1.92 \\
 & $3 \times [256]$ & 2.07 & 30.16 & 32.23 \\
 & $5 \times [256]$ & 3.07 & 109.95 & 113.02 \\
 & $7 \times [256]$ & 4.07 & 234.80 & 238.88 \\

\midrule

\multirow{3}{*}{CIFAR-10}
 & $5 \times [100]$ & 2.63 & 83.76 & 86.39 \\
 & $7 \times [100]$ & 2.78 & 140.92 & 143.71 \\
 & $10 \times [200]$ & 7.20 & 596.97 & 604.17 \\

\bottomrule
\end{tabular}
\end{table}

\textbf{RQ2: Online Efficiency.}
We evaluate online efficiency by analyzing runtime latency and communication volume across network conditions and model architectures. We break down the end-to-end (E2E) execution into two phases: (1) \emph{Input Processing}, encompassing secure input sharing and constraint setup; (2) \emph{Computation}, the secure backward bound propagation with FSS-based nonlinear primitives. Figure~\ref{fig:time_breakdown} and Table~\ref{tab:comm_breakdown} summarize the results.

\textbf{Latency.} 
As shown in Figure~\ref{fig:time_breakdown}, E2E latency ranges from 0.1s to 200.5s across network conditions and model architectures.
Smaller models (MNIST $2\times[20]$ to $3\times[256]$) complete within 0.1--9.9s, while CIFAR-10 $10\times[200]$ requires 151.1--200.5s. The computation phase dominates total latency, with input processing contributing only 0.0--0.3s. Bandwidth variation at fixed RTT (370 vs 600\,Mbps, 40ms) yields $<$2\% difference, whereas increasing RTT from 40ms to 60ms incurs $\sim$5\% overhead. This indicates that round-trip latency accumulates over $O(L^2)$ communication rounds, making RTT the dominant network factor.
{FSS evaluations are embarrassingly parallel across neurons, so exploiting more cores would further reduce latency for larger models.}

\textbf{Communication.}
Table~\ref{tab:comm_breakdown} details communication volume. E2E communication scales linearly with network complexity: from 0.77 MB (MNIST $2 \times [20]$) to 604.17 MB (CIFAR-10 $10 \times [200]$). The computation phase dominates, accounting for $\approx$98.8\% of total volume in the largest model. Input processing remains negligible ($\leq$7.20 MB even for CIFAR-10), confirming that \emph{communication overhead scales with secure operations rather than input dimensions}, ensuring scalability to higher-dimensional data.

\textbf{{RQ3: Offline Preprocessing Overhead.}}
As shown in Table~\ref{tab:preprocessing_stats}, both time and storage scale with network complexity, ranging from 0.17\,s / 39\,MB (MNIST $2 \times [20]$) to 71.16\,s / 31\,GB (CIFAR-10 $10 \times [200]$). While storage costs are substantial for deep networks, the preprocessing phase is \emph{one-time and data-independent}---generated offline without access to private inputs. 
{In practical deployment, preprocessing material can be prepared as a batch for multiple planned verification queries; each query consumes a fresh portion of the batch during execution, amortizing the generation cost across those queries.}


\begin{table}[tbp]
\centering
\caption{{Offline preprocessing overhead (one-time, data-independent).}}
\label{tab:preprocessing_stats}
\setlength{\tabcolsep}{3.5pt}
\renewcommand{\arraystretch}{1}
\begin{tabular}{llSS}
\toprule
\textbf{Dataset} & \textbf{Model} & {\textbf{Time (s)}} & {\textbf{Data (MB)}} \\
\midrule
\multirow{5}{*}{MNIST}
  & $2 \times [20]$ & 0.17 & 39 \\
  & $3 \times [20]$ & 0.41 & 100 \\
  & $3 \times [256]$ & 4.20 & 1644 \\
  & $5 \times [256]$ & 13.93 & 5677 \\
  & $7 \times [256]$ & 29.21 & 11898 \\
\midrule
\multirow{3}{*}{CIFAR-10}
  & $5 \times [100]$ & 12.03 & 4547 \\
  & $7 \times [100]$ & 19.25 & 7541 \\
  & $10 \times [200]$ & 71.16 & 31060 \\
\bottomrule
\end{tabular}
\end{table}

\textbf{{RQ4: Sensitivity to the Stability Constant.}}
\label{sec:rq4}
{We vary the stability constant $\epsilon_s$ 
(Eq.~(\ref{eq:alpha_unified})) to examine its impact on 
verification outcomes. Specifically, $\epsilon_s$ is varied from 
$10^{-4}$ (our default) to $10^{-1}$ on MNIST $5{\times}[256]$ 
($\epsilon = 0.015$). 
As shown in Table~\ref{tab:ablation_eps}, verification consistency 
remains 100\% across all tested values. At the same time, the reported 
MREs increase approximately linearly with $\epsilon_s$, indicating that 
larger stability constants introduce larger numerical deviations from 
the plaintext bounds. On this benchmark, verification 
decisions remain unchanged even when $\epsilon_s$ is increased to 
$1000\times$ the default value.}

\begin{table}[tbp]
\centering
\caption{{Sensitivity to the stability constant $\epsilon_s$ on MNIST $5{\times}[256]$ 
($\epsilon = 0.015$).}}
\label{tab:ablation_eps}
\begin{tabular}{cccc}
\toprule
\multirow{2}{*}{$\boldsymbol{\epsilon_s}$} & 
\multirow{2}{*}{\textbf{Cons.}} & 
\multicolumn{2}{c}{\textbf{Avg.\ MRE}} \\
\cmidrule(lr){3-4}
 & & \textbf{R (50)} & \textbf{U (50)} \\
\midrule
$10^{-1}$ & 100\% & $5.86 \times 10^{-2}$ & $1.40 \times 10^{-1}$ \\
$10^{-2}$ & 100\% & $6.47 \times 10^{-3}$ & $1.50 \times 10^{-2}$ \\
$10^{-3}$ & 100\% & $6.49 \times 10^{-4}$ & $1.52 \times 10^{-3}$ \\
$10^{-4}$ & 100\% & $7.05 \times 10^{-5}$ & $1.62 \times 10^{-4}$ \\
\bottomrule
\end{tabular}
\end{table}

\section{Conclusion}
We propose \name{}, the first privacy-preserving framework for neural network robustness verification in a 2PC setting.
Our protocol addresses the key challenges of performing linear relaxation and backward bound computation on protected model parameters and inputs.
We provided formal privacy guarantees in the semi-honest adversary model and analyzed the protocol's computational complexity. 
Experiments on standard benchmarks validate that our method achieves verification fidelity comparable to plaintext verification with moderate cryptographic overhead, representing a promising step toward efficient privacy-preserving verification.

\begin{acknowledgements}
This work was supported by the National Science and Technology Major Project (Grant 2022ZD0115901), the University of Bristol Engineering and Physical Sciences Research Council Impact Acceleration Account, and the National Natural Science Foundation of China (Grant Nos. 92582102, 62572013, and 62172019). Nianyun Song is supported by the China Scholarship Council (Grant 202506040106). We thank the anonymous reviewers for their constructive feedback.
\end{acknowledgements}

\bibliography{ref}

\begin{thebibliography}{42}
\providecommand{\natexlab}[1]{#1}
\providecommand{\url}[1]{\texttt{#1}}
\expandafter\ifx\csname urlstyle\endcsname\relax
  \providecommand{\doi}[1]{doi: #1}\else
  \providecommand{\doi}{doi: \begingroup \urlstyle{rm}\Url}\fi

\bibitem[Althoff(2015)]{Althoff15CORA}
Matthias Althoff.
\newblock An introduction to {CORA} 2015.
\newblock In \emph{1st and 2nd International Workshop on Applied Verification for Continuous and Hybrid Systems}, volume~34 of \emph{EPiC Series in Computing}, pages 120--151. EasyChair, 2015.

\bibitem[Boyle et~al.(2015)Boyle, Gilboa, and Ishai]{BoyleGI15}
Elette Boyle, Niv Gilboa, and Yuval Ishai.
\newblock Function secret sharing.
\newblock In \emph{Advances in Cryptology, {EUROCRYPT} 2015}, volume 9057 of \emph{Lecture Notes in Computer Science}, pages 337--367. Springer, 2015.

\bibitem[Brix et~al.(2023)Brix, M{\"{u}}ller, Bak, Johnson, and Liu]{brix2023vnncomp}
Christopher Brix, Mark~Niklas M{\"{u}}ller, Stanley Bak, Taylor~T. Johnson, and Changliu Liu.
\newblock First three years of the international verification of neural networks competition {(VNN-COMP)}.
\newblock \emph{Int. J. Softw. Tools Technol. Transf.}, 25\penalty0 (3):\penalty0 329--339, 2023.

\bibitem[Davenport and Kalakota(2019)]{davenport2019potential}
Thomas Davenport and Ravi Kalakota.
\newblock The potential for artificial intelligence in healthcare.
\newblock \emph{Future Healthcare Journal}, 6\penalty0 (2):\penalty0 94--98, 2019.

\bibitem[Duong et~al.(2023)Duong, Li, Nguyen, and Dwyer]{Hai2023neuralsat}
Hai Duong, Linhan Li, ThanhVu Nguyen, and Matthew~B. Dwyer.
\newblock A {DPLL(T)} framework for verifying deep neural networks.
\newblock \emph{CoRR}, abs/2307.10266, 2023.

\bibitem[Duong et~al.(2024)Duong, Xu, Nguyen, and Dwyer]{Duong2024neuralsat}
Hai Duong, Dong Xu, ThanhVu Nguyen, and Matthew~B. Dwyer.
\newblock Harnessing neuron stability to improve {DNN} verification.
\newblock \emph{Proc. {ACM} Softw. Eng.}, 1\penalty0 ({FSE}):\penalty0 859--881, 2024.

\bibitem[Duong et~al.(2025)Duong, Nguyen, and Dwyer]{Duong2025NeuralSAT}
Hai Duong, ThanhVu Nguyen, and Matthew~B. Dwyer.
\newblock {NeuralSAT}: {A} high-performance verification tool for deep neural networks.
\newblock In \emph{Computer Aided Verification, {CAV} 2025}, volume 15932 of \emph{Lecture Notes in Computer Science}, pages 409--423. Springer, 2025.

\bibitem[Esteva et~al.(2017)Esteva, Kuprel, Novoa, Ko, Swetter, Blau, and Thrun]{esteva2017dermatologist}
Andre Esteva, Brett Kuprel, Roberto~A Novoa, Justin Ko, Susan~M Swetter, Helen~M Blau, and Sebastian Thrun.
\newblock Dermatologist-level classification of skin cancer with deep neural networks.
\newblock \emph{Nature}, 542\penalty0 (7639):\penalty0 115--118, 2017.

\bibitem[Esteva et~al.(2019)Esteva, Robicquet, Ramsundar, Kuleshov, DePristo, Chou, Cui, Corrado, Thrun, and Dean]{esteva2019guide}
Andre Esteva, Alexandre Robicquet, Bharath Ramsundar, Volodymyr Kuleshov, Mark DePristo, Katherine Chou, Claire Cui, Greg Corrado, Sebastian Thrun, and Jeff Dean.
\newblock A guide to deep learning in healthcare.
\newblock \emph{Nature Medicine}, 25\penalty0 (1):\penalty0 24--29, 2019.

\bibitem[Evans et~al.(2018)Evans, Kolesnikov, and Rosulek]{Evans2018MPC}
David Evans, Vladimir Kolesnikov, and Mike Rosulek.
\newblock A pragmatic introduction to secure multi-party computation.
\newblock \emph{Found. Trends Priv. Secur.}, 2\penalty0 (2-3):\penalty0 70--246, 2018.

\bibitem[Goldreich et~al.(1987)Goldreich, Micali, and Wigderson]{goldreich1987how}
O.~Goldreich, S.~Micali, and A.~Wigderson.
\newblock How to play {ANY} mental game.
\newblock In \emph{Proceedings of the Annual {ACM} Symposium on Theory of Computing, {STOC} 1987}, pages 218--229. {ACM}, 1987.

\bibitem[Goodfellow et~al.(2015)Goodfellow, Shlens, and Szegedy]{Goodfellow2014explain}
Ian~J. Goodfellow, Jonathon Shlens, and Christian Szegedy.
\newblock Explaining and harnessing adversarial examples.
\newblock In \emph{International Conference on Learning Representations, {ICLR} 2015}, 2015.

\bibitem[Gupta et~al.(2025)Gupta, Chandran, Gupta, Katz, and Sharma]{Gupta2025shark}
Kanav Gupta, Nishanth Chandran, Divya Gupta, Jonathan Katz, and Rahul Sharma.
\newblock {SHARK:} actively secure inference using function secret sharing.
\newblock In \emph{{IEEE} Symposium on Security and Privacy, {SP} 2025}, pages 2472--2490. {IEEE}, 2025.

\bibitem[Hendrycks and Dietterich(2019)]{Hendrycks2019Benchmark}
Dan Hendrycks and Thomas~G. Dietterich.
\newblock Benchmarking neural network robustness to common corruptions and perturbations.
\newblock In \emph{International Conference on Learning Representations, {ICLR} 2019}, 2019.

\bibitem[Katz et~al.(2017)Katz, Barrett, Dill, Julian, and Kochenderfer]{katz2017reluplex}
Guy Katz, Clark Barrett, David~L. Dill, Kyle Julian, and Mykel~J. Kochenderfer.
\newblock Reluplex: An efficient {SMT} solver for verifying deep neural networks.
\newblock In \emph{Computer Aided Verification, {CAV} 2017}, volume 10426 of \emph{Lecture Notes in Computer Science}, pages 97--117. Springer, 2017.

\bibitem[Katz et~al.(2019)Katz, Huang, Ibeling, Julian, Lazarus, Lim, Shah, Thakoor, Wu, Zeljic, Dill, Kochenderfer, and Barrett]{Katz2019Marabou}
Guy Katz, Derek~A. Huang, Duligur Ibeling, Kyle Julian, Christopher Lazarus, Rachel Lim, Parth Shah, Shantanu Thakoor, Haoze Wu, Aleksandar Zeljic, David~L. Dill, Mykel~J. Kochenderfer, and Clark~W. Barrett.
\newblock The {Marabou} framework for verification and analysis of deep neural networks.
\newblock In \emph{Computer Aided Verification, {CAV} 2019}, volume 11561 of \emph{Lecture Notes in Computer Science}, pages 443--452. Springer, 2019.

\bibitem[Knott et~al.(2021)Knott, Venkataraman, Hannun, Sengupta, Ibrahim, and van~der Maaten]{Knott2021CrypTen}
Brian Knott, Shobha Venkataraman, Awni~Y. Hannun, Shubho Sengupta, Mark Ibrahim, and Laurens van~der Maaten.
\newblock {CrypTen}: Secure multi-party computation meets machine learning.
\newblock \emph{Advances in Neural Information Processing Systems}, 34:\penalty0 4961--4973, 2021.

\bibitem[Krizhevsky(2009)]{krizhevsky2009learning}
Alex Krizhevsky.
\newblock Learning multiple layers of features from tiny images.
\newblock Technical Report TR-2009, Department of Computer Science, University of Toronto, 2009.

\bibitem[Kumar et~al.(2020)Kumar, Rathee, Chandran, Gupta, Rastogi, and Sharma]{Kumar2020cryptflow}
Nishant Kumar, Mayank Rathee, Nishanth Chandran, Divya Gupta, Aseem Rastogi, and Rahul Sharma.
\newblock {CrypTFlow}: Secure {TensorFlow} inference.
\newblock In \emph{{IEEE} Symposium on Security and Privacy, {SP} 2020}, pages 336--353. {IEEE}, 2020.

\bibitem[LeCun et~al.(1998)LeCun, Bottou, Bengio, and Haffner]{lecun1998gradient}
Yann LeCun, L{\'e}on Bottou, Yoshua Bengio, and Patrick Haffner.
\newblock Gradient-based learning applied to document recognition.
\newblock \emph{Proceedings of the IEEE}, 86\penalty0 (11):\penalty0 2278--2324, 1998.

\bibitem[Lee et~al.(2022)Lee, Lee, Lee, Kim, Kim, No, and Choi]{Lee2022HE}
Eunsang Lee, Joon{-}Woo Lee, Junghyun Lee, Young{-}Sik Kim, Yongjune Kim, Jong{-}Seon No, and Woosuk Choi.
\newblock Low-complexity deep convolutional neural networks on fully homomorphic encryption using multiplexed parallel convolutions.
\newblock In \emph{International Conference on Machine Learning, {ICML} 2022}, volume 162 of \emph{Proceedings of Machine Learning Research}, pages 12403--12422. {PMLR}, 2022.

\bibitem[Lee et~al.(2024)Lee, Ko, Kim, and Oh]{Lee2024zk}
Seunghwa Lee, Hankyung Ko, Jihye Kim, and Hyunok Oh.
\newblock {vCNN}: Verifiable convolutional neural network based on zk-{SNARKs}.
\newblock \emph{{IEEE} Trans. Dependable Secur. Comput.}, 21\penalty0 (4):\penalty0 4254--4270, 2024.

\bibitem[Litjens et~al.(2017)Litjens, Kooi, Bejnordi, Setio, Ciompi, Ghafoorian, Van Der~Laak, Van~Ginneken, and S{\'a}nchez]{litjens2017survey}
Geert Litjens, Thijs Kooi, Babak~Ehteshami Bejnordi, Arnaud Arindra~Adiyoso Setio, Francesco Ciompi, Mohsen Ghafoorian, Jeroen~Awm Van Der~Laak, Bram Van~Ginneken, and Clara~I S{\'a}nchez.
\newblock A survey on deep learning in medical image analysis.
\newblock \emph{Medical Image Analysis}, 42:\penalty0 60--88, 2017.

\bibitem[Lou and Jiang(2021)]{Lou2021HEMET}
Qian Lou and Lei Jiang.
\newblock {HEMET:} {A} homomorphic-encryption-friendly privacy-preserving mobile neural network architecture.
\newblock In \emph{International Conference on Machine Learning, {ICML} 2021}, volume 139 of \emph{Proceedings of Machine Learning Research}, pages 7102--7110. {PMLR}, 2021.

\bibitem[Maheri et~al.(2025)Maheri, Haddadi, and Davidson]{Maheri2025telesparse}
Mohammad~Mahdi Maheri, Hamed Haddadi, and Alex Davidson.
\newblock {TeleSparse}: Practical privacy-preserving verification of deep neural networks.
\newblock \emph{Proc. Priv. Enhancing Technol.}, 2025\penalty0 (4):\penalty0 861--880, 2025.

\bibitem[Mohassel and Rindal(2018)]{mohassel2018aby3}
Payman Mohassel and Peter Rindal.
\newblock {ABY3}: A mixed protocol framework for machine learning.
\newblock In \emph{Proceedings of the {ACM} {SIGSAC} Conference on Computer and Communications Security, {CCS} 2018}, pages 35--52. {ACM}, 2018.

\bibitem[Salman et~al.(2019)Salman, Yang, Zhang, Hsieh, and Zhang]{salman2019convex}
Hadi Salman, Greg Yang, Huan Zhang, Cho-Jui Hsieh, and Pengchuan Zhang.
\newblock A convex relaxation barrier to tight robustness verification of neural networks.
\newblock \emph{Advances in Neural Information Processing Systems}, 32:\penalty0 9835--9846, 2019.

\bibitem[Shi et~al.(2025)Shi, Jin, Kolter, Jana, Hsieh, and Zhang]{shi2025genbab}
Zhouxing Shi, Qirui Jin, Zico Kolter, Suman Jana, Cho-Jui Hsieh, and Huan Zhang.
\newblock Neural network verification with branch-and-bound for general nonlinearities.
\newblock In \emph{Tools and Algorithms for the Construction and Analysis of Systems, {TACAS} 2025}, volume 15696 of \emph{Lecture Notes in Computer Science}, pages 315--335. Springer, 2025.

\bibitem[Singh et~al.(2019)Singh, Gehr, P{\"u}schel, and Vechev]{singh2019abstract}
Gagandeep Singh, Timon Gehr, Markus P{\"u}schel, and Martin Vechev.
\newblock An abstract domain for certifying neural networks.
\newblock \emph{Proc. {ACM} Program. Lang.}, 3\penalty0 ({POPL}):\penalty0 1--30, 2019.

\bibitem[Song et~al.(2026)Song, Wang, Bie, Guo, Zhang, and Jia]{obligate}
Nianyun Song, Fuyi Wang, Rongfang Bie, Yu~Guo, Leo~Yu Zhang, and Xiaohua Jia.
\newblock {ObliviGate}: Towards architecture-oblivious privacy-preserving inference for malicious security.
\newblock \emph{IEEE Transactions on Services Computing}, 2026.

\bibitem[Szegedy et~al.(2014)Szegedy, Zaremba, Sutskever, Bruna, Erhan, Goodfellow, and Fergus]{Szegedy13intriguing}
Christian Szegedy, Wojciech Zaremba, Ilya Sutskever, Joan Bruna, Dumitru Erhan, Ian~J. Goodfellow, and Rob Fergus.
\newblock Intriguing properties of neural networks.
\newblock In \emph{International Conference on Learning Representations, {ICLR} 2014}, 2014.

\bibitem[Tan et~al.(2021)Tan, Knott, Tian, and Wu]{Tan2021cryptgpu}
Sijun Tan, Brian Knott, Yuan Tian, and David~J. Wu.
\newblock {CryptGPU}: Fast privacy-preserving machine learning on the {GPU}.
\newblock In \emph{{IEEE} Symposium on Security and Privacy, {SP} 2021}, pages 1021--1038. {IEEE}, 2021.

\bibitem[Voigt and Von~dem Bussche(2017)]{voigt2017eu}
Paul Voigt and Axel Von~dem Bussche.
\newblock \emph{The {EU} General Data Protection Regulation ({GDPR}): A Practical Guide}.
\newblock Springer International Publishing, 2017.

\bibitem[Wagh et~al.(2019)Wagh, Gupta, and Chandran]{wagh2019securenn}
Sameer Wagh, Divya Gupta, and Nishanth Chandran.
\newblock {SecureNN}: 3-party secure computation for neural network training.
\newblock \emph{Proc. Priv. Enhancing Technol.}, 2019\penalty0 (3):\penalty0 26--49, 2019.

\bibitem[Wagh et~al.(2021)Wagh, Tople, Benhamouda, Kushilevitz, Mittal, and Rabin]{Wagh2021falcon}
Sameer Wagh, Shruti Tople, Fabrice Benhamouda, Eyal Kushilevitz, Prateek Mittal, and Tal Rabin.
\newblock Falcon: Honest-majority maliciously secure framework for private deep learning.
\newblock \emph{Proc. Priv. Enhancing Technol.}, 2021\penalty0 (1):\penalty0 188--208, 2021.

\bibitem[Wang et~al.(2021)Wang, Zhang, Xu, Lin, Jana, Hsieh, and Kolter]{wang2021beta}
Shiqi Wang, Huan Zhang, Kaidi Xu, Xue Lin, Suman Jana, Cho-Jui Hsieh, and J~Zico Kolter.
\newblock {Beta-CROWN}: Efficient bound propagation with per-neuron split constraints for complete and incomplete neural network verification.
\newblock \emph{Advances in Neural Information Processing Systems}, 34:\penalty0 29909--29921, 2021.

\bibitem[Wu et~al.(2024)Wu, Isac, Zeljic, Tagomori, Daggitt, Kokke, Refaeli, Amir, Julian, Bassan, Huang, Lahav, Wu, Zhang, Komendantskaya, Katz, and Barrett]{Wu2024Marabou}
Haoze Wu, Omri Isac, Aleksandar Zeljic, Teruhiro Tagomori, Matthew~L. Daggitt, Wen Kokke, Idan Refaeli, Guy Amir, Kyle Julian, Shahaf Bassan, Pei Huang, Ori Lahav, Min Wu, Min Zhang, Ekaterina Komendantskaya, Guy Katz, and Clark~W. Barrett.
\newblock Marabou 2.0: {A} versatile formal analyzer of neural networks.
\newblock In \emph{Computer Aided Verification, {CAV} 2024}, volume 14682 of \emph{Lecture Notes in Computer Science}, pages 249--264. Springer, 2024.

\bibitem[Xu et~al.(2020)Xu, Shi, Zhang, Wang, Chang, Huang, Kailkhura, Lin, and Hsieh]{xu2020automatic}
Kaidi Xu, Zhouxing Shi, Huan Zhang, Yihan Wang, Kai-Wei Chang, Minlie Huang, Bhavya Kailkhura, Xue Lin, and Cho-Jui Hsieh.
\newblock Automatic perturbation analysis for scalable certified robustness and beyond.
\newblock \emph{Advances in Neural Information Processing Systems}, 33:\penalty0 1129--1141, 2020.

\bibitem[Xu et~al.(2021)Xu, Zhang, Wang, Wang, Jana, Lin, and Hsieh]{xu2021fast}
Kaidi Xu, Huan Zhang, Shiqi Wang, Yihan Wang, Suman Jana, Xue Lin, and Cho-Jui Hsieh.
\newblock {Fast and Complete}: Enabling complete neural network verification with rapid and massively parallel incomplete verifiers.
\newblock In \emph{International Conference on Learning Representations, {ICLR} 2021}, 2021.

\bibitem[Yao(1982)]{yao1982protocols}
Andrew~C Yao.
\newblock Protocols for secure computations.
\newblock In \emph{Proceedings of the Annual Symposium on Foundations of Computer Science, {FOCS} 1982}, pages 160--164. {IEEE}, 1982.

\bibitem[Zhang et~al.(2018)Zhang, Weng, Chen, Hsieh, and Daniel]{zhang2018efficient}
Huan Zhang, Tsui-Wei Weng, Pin-Yu Chen, Cho-Jui Hsieh, and Luca Daniel.
\newblock Efficient neural network robustness certification with general activation functions.
\newblock \emph{Advances in Neural Information Processing Systems}, 31:\penalty0 4939--4948, 2018.

\bibitem[Zhang et~al.(2021)Zhang, Xin, and Wu]{Zhang2021Survey}
Qiao Zhang, Chunsheng Xin, and Hongyi Wu.
\newblock Privacy-preserving deep learning based on multiparty secure computation: A survey.
\newblock \emph{{IEEE} Internet Things J.}, 8\penalty0 (13):\penalty0 10412--10429, 2021.

\end{thebibliography}

\newpage
\onecolumn
\title{Appendix}

\makeatletter
\renewcommand{\@thanks}{}
\makeatother

\setcounter{footnote}{0}
\maketitle
\appendix
\renewcommand{\thetheorem}{\Alph{section}.\arabic{theorem}}
\renewcommand{\thelemma}{\Alph{section}.\arabic{lemma}}

\setcounter{theorem}{0}
\setcounter{lemma}{0}

\section{Detailed Protocols}
\label{app:algorithms}

\subsection{Secure Relaxation Slope Computation}
Algorithm~\ref{alg:alpha} details the secure protocol $\Pi_{\boldsymbol{\alpha}}$ for computing the relaxation slope vector $\boldsymbol{\alpha}^{(l)}$, directly implementing the unified arithmetic expression derived in Eq.~\eqref{eq:alpha_unified} of the main text. As discussed in Section~\ref{subsec:alpha}, securely evaluating the piecewise conditions of the standard ReLU relaxation introduces prohibitive communication overhead due to data-dependent branching and secure comparisons. By formulating the slope computation as a continuous, data-independent function, our protocol evaluates all neurons uniformly. The protocol first extracts the required positive bounds and the magnitudes of the negative bounds using standard $\mathsf{SecReLU}$ operations. After locally incorporating the public stability constant $\epsilon_s$ to prevent numerical instability, the division is securely resolved by computing the reciprocal of the denominator via $\mathsf{SecRecip}$, followed by an element-wise $\mathsf{SecMul}$. This branching-free design ensures deterministic execution time and minimizes the use of heavy cryptographic primitives.

\begin{algorithm}[ht]
\caption{$\Pi_{\boldsymbol{\alpha}}$: Secure Slope Computation}
\label{alg:alpha}
\begin{algorithmic}[1]
\Require Secret shares of pre-activation lower bounds $\langle \underline{\mathbf{z}}^{(l)} \rangle \in \mathbb{Z}_{2^k}^{d_l}$, upper bounds $\langle \bar{\mathbf{z}}^{(l)} \rangle \in \mathbb{Z}_{2^k}^{d_l}$, and public stability constant $\epsilon_s \in \mathbb{R}^{+}$
\Ensure Secret shares of the relaxation slope vector $\langle \boldsymbol{\alpha}^{(l)} \rangle \in \mathbb{Z}_{2^k}^{d_l}$

\State $\langle \mathbf{n} \rangle \gets \mathsf{SecReLU}(\langle \bar{\mathbf{z}}^{(l)} \rangle)$ \Comment{Numerator: $\mathrm{ReLU}(\bar{\mathbf{z}}^{(l)})$}
\State $\langle \mathbf{d}_{sub} \rangle \gets \mathsf{SecReLU}(-\langle \underline{\mathbf{z}}^{(l)} \rangle)$ \Comment{Partial denominator: $\mathrm{ReLU}(-\underline{\mathbf{z}}^{(l)})$}

\State $\langle \mathbf{d} \rangle \gets \langle \mathbf{n} \rangle + \langle \mathbf{d}_{sub} \rangle + \epsilon_s$ \Comment{Full denominator via local addition (zero communication)}

\State $\langle \mathbf{r} \rangle \gets \mathsf{SecRecip}(\langle \mathbf{d} \rangle)$ \Comment{Element-wise secure reciprocal: $1 / \mathbf{d}$}
\State $\langle \boldsymbol{\alpha}^{(l)} \rangle \gets \mathsf{SecMul}(\langle \mathbf{n} \rangle, \langle \mathbf{r} \rangle)$ \Comment{Element-wise secure multiplication for division}

\State \Return $\langle \boldsymbol{\alpha}^{(l)} \rangle$
\end{algorithmic}
\end{algorithm}

\subsection{Secure Intercept Accumulation}
Algorithm~\ref{alg:delta} details the secure protocol $\Pi_{\boldsymbol{\delta}}$ for computing the intercept contribution vectors $\underline{\boldsymbol{\delta}}^{(t-1)}$ and $\bar{\boldsymbol{\delta}}^{(t-1)}$, implementing the reformulated expressions in Eqs.~\eqref{eq:delta_lower_unified}--\eqref{eq:delta_upper_unified}. As discussed in Section~\ref{subsec:bias}, directly evaluating the conditional summations would necessitate expensive secure comparison protocols. By leveraging the property that the relaxation slopes satisfy $\boldsymbol{\alpha}^{(t-1)} \ge 0$, the updated coefficients $\mathbf{A}^{(t-1)}$ preserve the necessary sign information. The protocol efficiently isolates the positive and negative components of the coefficient matrix $\langle \mathbf{A}^{(t-1)} \rangle$ and the negative components of the pre-activation bounds $\langle \underline{\mathbf{z}}^{(t-1)} \rangle$ using three parallel $\mathsf{SecReLU}$ evaluations. The final vectors $\langle \underline{\boldsymbol{\delta}}^{(t-1)} \rangle$ and $\langle \bar{\boldsymbol{\delta}}^{(t-1)} \rangle$ are then obtained via $\mathsf{SecMatMul}$.

\begin{algorithm}[ht]
\caption{$\Pi_{\boldsymbol{\delta}}$: Secure Intercept Accumulation}
\label{alg:delta}
\begin{algorithmic}[1]
\Require Secret shares of coefficients $\langle \mathbf{A}^{(t-1)} \rangle \in \mathbb{Z}_{2^k}^{d_l \times d_{t-1}}$ and pre-activation lower bounds $\langle \underline{\mathbf{z}}^{(t-1)} \rangle \in \mathbb{Z}_{2^k}^{d_{t-1}}$
\Ensure Secret shares of intercept contributions $\langle \underline{\boldsymbol{\delta}}^{(t-1)} \rangle, \langle \bar{\boldsymbol{\delta}}^{(t-1)} \rangle \in \mathbb{Z}_{2^k}^{d_l}$

\State $\langle \mathbf{R}_A^{-} \rangle \gets \mathsf{SecReLU}(-\langle \mathbf{A}^{(t-1)} \rangle)$ \Comment{$\operatorname{ReLU}(-\mathbf{A}^{(t-1)})$, element-wise}
\State $\langle \mathbf{R}_A^{+} \rangle \gets \mathsf{SecReLU}(\langle \mathbf{A}^{(t-1)} \rangle)$ \Comment{$\operatorname{ReLU}(\mathbf{A}^{(t-1)})$, element-wise}
\State $\langle \mathbf{r}_z \rangle \gets \mathsf{SecReLU}(-\langle \underline{\mathbf{z}}^{(t-1)} \rangle)$ \Comment{$\operatorname{ReLU}(-\underline{\mathbf{z}}^{(t-1)})$}

\State $\langle \underline{\boldsymbol{\delta}}^{(t-1)} \rangle \gets -\mathsf{SecMatMul}(\langle \mathbf{R}_A^{-} \rangle, \langle \mathbf{r}_z \rangle)$ \Comment{Eq.~\eqref{eq:delta_lower_unified}}
\State $\langle \bar{\boldsymbol{\delta}}^{(t-1)} \rangle \gets \mathsf{SecMatMul}(\langle \mathbf{R}_A^{+} \rangle, \langle \mathbf{r}_z \rangle)$ \Comment{Eq.~\eqref{eq:delta_upper_unified}}

\State \Return $\langle \underline{\boldsymbol{\delta}}^{(t-1)} \rangle, \langle \bar{\boldsymbol{\delta}}^{(t-1)} \rangle$
\end{algorithmic}
\end{algorithm}

\section{Detailed Analysis}
\label{app:analysis}

\subsection{Security Proof}
\label{app:security_proof}

\begin{proof}
We prove security via a hybrid argument. Let the verification circuit consist of $L$ layers corresponding to the neural network's bound propagation. For each layer $l$, let $\mathsf{F}_l$ denote the FSS scheme used (either for $\mathsf{SecReLU}$, $\mathsf{SecMul}$, or $\mathsf{SecRecip}$), with associated simulator $\mathsf{Sim}_l$.

We construct simulator $\mathsf{Sim}_0$ for corrupted model owner $P_0$; the case for data owner $P_1$ is symmetric.

\paragraph{Hybrid Definitions.}
\begin{itemize}
    \item $\mathbf{H}_0$: The simulator knows all inputs and executes the real protocol honestly. This is identical to the real execution.
    \item $\mathbf{H}_l$ (for $l \in \{1, \ldots, L\}$): Execute the real protocol for layers $1, \ldots, L-l$. For layers $L-l+1, \ldots, L$, use the FSS simulator $\mathsf{Sim}_l$ to generate keys and simulate protocol messages.
    \item $\mathbf{H}_{L+1}$: Same as $\mathbf{H}_L$, but additionally replace $P_1$'s inputs with dummy values. This is the ideal-world execution.
\end{itemize}

\paragraph{Indistinguishability of Adjacent Hybrids.}

\emph{$\mathbf{H}_l \approx_c \mathbf{H}_{l+1}$ for $l \in \{0, \ldots, L-1\}$:}
These hybrids differ only in layer $j = L - l$. Suppose a distinguisher $\mathcal{D}$ distinguishes $\mathbf{H}_l$ from $\mathbf{H}_{l+1}$ with non-negligible advantage. We construct an adversary $\mathcal{A}$ that breaks the security of FSS scheme $\mathsf{F}_j$:
\begin{enumerate}
    \item $\mathcal{A}$ runs the protocol for layers $1, \ldots, j-1$ honestly.
    \item $\mathcal{A}$ receives an FSS key $k_b$ from the FSS challenger (either real or simulated).
    \item $\mathcal{A}$ uses $k_b$ to execute layer $j$ and simulates layers $j+1, \ldots, L$.
    \item $\mathcal{A}$ outputs $\mathcal{D}$'s guess.
\end{enumerate}
If $\mathcal{D}$ distinguishes, then $\mathcal{A}$ breaks FSS security—a contradiction.

\emph{$\mathbf{H}_L \approx_s \mathbf{H}_{L+1}$:}
In both hybrids, $P_1$'s inputs are masked by uniform random shares. Since additive secret sharing is information-theoretically secure, replacing real inputs with dummy inputs produces an identical distribution of shares. Thus these hybrids are \emph{statistically} indistinguishable.

\paragraph{Conclusion.}
By the hybrid argument, $\mathbf{H}_0 \approx_c \mathbf{H}_{L+1}$, which establishes that the real and ideal executions are computationally indistinguishable.
\end{proof}

\subsection{Complexity Analysis}
\label{app:complexity}

We analyze the theoretical complexity of \name{} for an $L$-layer network with maximum width $d$. Table~\ref{tab:primitives} summarizes the costs of the underlying building blocks.

\begin{table}[ht]
\centering
\setlength{\tabcolsep}{4pt}
\caption{Complexity of secure building blocks. \textbf{Rounds} and \textbf{Comm.} denote online costs; \textbf{Preproc.} denotes offline costs. Parameters: bit-width $k = 64$, fractional bits $n_f = 26$, security parameter $\lambda = 128$, Newton--Raphson iterations $n_{\mathrm{iter}} = 1$.}
\label{tab:primitives}
\begin{tabular}{lccc}
\toprule
\textbf{Primitive} & \textbf{Rounds} & \textbf{Comm. (bits)} & \textbf{Preproc. (bits)} \\
\midrule
SecMul & 1 & $2k$ \; (128) & $3k$ \; (192) \\
SecMatMul$_{m \times n \times p}$ & 1 & $2(mn{+}np)k$ & $(mn{+}np{+}mp)k$ \\
SecARS & 2 & $k + 2$ \; (66) & $(k{+}n_f)\lambda + O(k)$ \\
SecReLU & 2 & $k + 1$ \; (65) & $k\lambda + O(k)$ \\
SecAbs & 2 & $k + 1$ \; (65) & $k\lambda + O(k)$ \\
SecRecip & 6 & $O(k)$ & $O(k\lambda)$ \\
\bottomrule
\end{tabular}
\end{table}

\paragraph{Sub-protocol Complexity.}
The complexity of our specialized sub-protocols is derived as follows:
\begin{itemize}
    \item \textbf{Protocol $\Pi_{\boldsymbol{\alpha}}$ (Algorithm~\ref{alg:alpha}):} This involves two parallel $\mathsf{SecReLU}$ operations (2 rounds), one $\mathsf{SecRecip}$ ($4 + 2n_{\mathrm{iter}}$ rounds, including $n_{\mathrm{iter}}$ Newton--Raphson iterations), and a final multiplication (1 round). The total is $7 + 2n_{\mathrm{iter}}$ rounds, with communication $O(dk)$.
    \item \textbf{Protocol $\Pi_{\boldsymbol{\delta}}$ (Algorithm~\ref{alg:delta}):} This protocol executes three parallel $\mathsf{SecReLU}$ evaluations (2 rounds) followed by two parallel $\mathsf{SecMatMul}$ operations (1 round). The total is \textbf{3 rounds} with $O(d^2 k)$ communication.
\end{itemize}

\paragraph{End-to-End Complexity.}
The complete \name{} protocol (Algorithm~\ref{alg:verify}) proceeds layer-by-layer. For a target layer $l$, Algorithm~\ref{alg:lbp} iterates $l-1$ times. Each iteration involves: $\mathsf{SecMatMul}$ (1 round) $\to$ slope multiplication (1 round) $\to$ $\Pi_{\boldsymbol{\delta}}$ (3 rounds), yielding 5 rounds per iteration. The bias accumulation in Lines 6-7 overlaps with $\Pi_{\boldsymbol{\delta}}$: the $\mathsf{SecMatMul}(\langle \mathbf{A} \rangle, \langle \mathbf{b} \rangle)$ executes in parallel with the SecReLU operations, and the final addition of $\boldsymbol{\delta}$ is a local operation.

Summing over all layers, the total complexity is:
\begin{itemize}
    \item \textbf{Rounds:} $O(L^2)$. The iterative bound computation contributes $\sum_{l=2}^{L-1} 5(l-1) = \frac{5(L-2)(L-1)}{2} \approx \frac{5}{2}L^2$ rounds. The final margin certification adds $5(L-1) = O(L)$ rounds. Additionally, $\Pi_{\boldsymbol{\alpha}}$ is invoked $L-1$ times, contributing $(L-1)(7 + 2n_{\mathrm{iter}})$ rounds.
    \item \textbf{Communication:} $O(L^2 d^2 k)$. This is dominated by the dense matrix multiplications in the iterative backward pass.
    
    \item \textbf{Computation:} $O(L^2 d^3)$, dominated by $O(L^2)$ matrix multiplications of size $d \times d$.
    
    \item \textbf{Preprocessing:} $O(L^2 d^2 k)$ bits for matrix Beaver triples and $O(Ld k\lambda)$ bits for FSS keys.
\end{itemize}

\subsection{Error Analysis}
\label{appendix:error-analysis}

We analyze the LBP bound computation, which constructs an affine form bound by backward propagation of a coefficient matrix $\mathbf{A}$ and a bias vector $\mathbf{c}$, accumulating constants and relaxation terms layer by layer, and then forms the final upper/lower bound of the target quantity as a function of the input.
The final step is to maximize/minimize this affine form over the input perturbation set $\mathcal{X}$, which yields closed-form expressions for the bounds.

We denote by $r(\mathbf{M})$ the vector of row-wise sums of absolute values of a matrix $M$, i.e., $r(\mathbf{M})_i = \sum_j |\mathbf{M}_{ij}|$.
We denote by $\|\mathbf{M}\|_{\infty} = \max_i r(\mathbf{M})_i$ the induced infinity norm of a matrix $\mathbf{M}$, and by $\|\mathbf{v}\|_1 = \sum_i |v_i|$ the $\ell_1$ norm of a vector $\mathbf{v}$.
Let $Q_{n_f} : \mathbb{R} \to \mathbb{R}$ denote the fixed-point quantization function that maps a real number to its fixed-point representation with $n_f$ fractional bits, i.e., $Q_{n_f}(y) = 2^{-n_f} \lfloor 2^{n_f} y \rfloor$.

For each layer $i$, the algorithm uses a diagonal matrix of slopes $\mathrm{diag}(\bm{\alpha}^{(i)})$ with entries $\alpha_k^{(i)} \in [0,1]$ to relax the ReLU activation, applied during backward propagation.
For a fixed fractional precision parameter $n_f$, let $\mathrm{ARS}_{n_f}(\cdot)$ denote the arithmetic-right-shift scaling operation that divides its input by $2^{n_f}$ and rounds towards $-\infty$ (i.e., takes the floor).
\emph{We model each call to $\mathrm{ARS}_{n_f}$ as introducing an additive element-wise rounding error bounded by a constant $\varepsilon_q$ in real units}, where typically $\varepsilon_q = 2^{-n_f}$.

\paragraph{Assumptions}

Our analysis makes the following assumptions:
\begin{itemize}
    \item \textbf{Assumption 1 (Range safety)}: During execution, every intermediate integer value remains within the representable range of the fixed-point format, i.e., no overflow occurs. Equivalently, the fixed-point computation is a faithful representation of the corresponding real arithmetic plus truncation errors.
    \item \textbf{Assumption 2 (Quantization error model)}: Each call to $\mathrm{ARS}_{n_f}$ corresponds to applying $Q_{n_f}$ to the intended real-valued quantity, and thus introduces an additive error bounded by $\varepsilon_q$ in real units:
        \begin{equation*}
            Q_{n_f}(y) = y + \eta(y) \quad \text{with} \quad |\eta(y)| \leq \varepsilon_q.
        \end{equation*}
\end{itemize}F

\paragraph{Backward propagation operator}

For the target bound computation, define the ideal backward propagation of coefficient matrices $\{\mathbf{A}^{(i)}\}$ by
\begin{equation*}
    \mathbf{A}^{(i-1)} = \mathbf{A}^{(i)} \mathbf{W}^{(i)} \mathrm{diag}(\bm{\alpha}^{(i-1)}) , \quad i=1,\ldots,m,
\end{equation*}
where $m$ is the number of backward steps performed by the algorithm (e.g., $m=L-1$ for the full backward propagation from the last layer).
The starting coefficient $\mathbf{A}^{(m)}$ is initialized as the weight matrix of the current layer.

Let $\{\mathbf{\Lambda}^{(i)}\}$ be the corresponding coefficients computed by the fixed-point implementation, which applies the $\mathrm{ARS}_{n_f}$ operation at each multiplication step.
The coefficient error at layer $i$ is defined as $\mathbf{E}^{(i)} = \mathbf{\Lambda}^{(i)} - \mathbf{A}^{(i)}$.

\begin{theorem}\label{thm:coefficient_error}
    Under assumptions 1 and 2, given the slopes $\bm{\alpha}^{(i)} \in [0,1]$ element-wise, the coefficient error satisfies the recurrence
    \begin{equation}
        \| \mathbf{E}^{(i-1)} \|_\infty \leq \| \mathbf{W}^{(i)} \|_\infty \| \mathbf{E}^{(i)} \|_\infty + \Delta_i, \quad i = 1,\ldots,m,
    \end{equation}
    where $\Delta_i$ is an upper bound on the $\ell_\infty$ norm of the local rounding injection at step $i$, induced by the truncations performed at that step.
    In particular, if step $i$ produces a matrix with $q_i$ columns before truncation, then one may take the conservative bound $\Delta_i \leq q_i \varepsilon_q$.
    Consequently, if the initial coefficient is computed exactly (i.e., $\mathbf{E}^{(m)} = \mathbf{0}$), then the final coefficient error after $m$ steps is bounded by
    \begin{equation}
        \| \mathbf{E}^{(0)} \|_\infty \leq \sum_{i=1}^m \left( \prod_{j=1}^{i-1} \| \mathbf{W}^{(j)} \|_\infty \right) \Delta_i.
    \end{equation}
    If, furthermore, $\| \mathbf{W}^{(i)} \|_\infty \leq \rho$ for all relevant layers, then the error bound simplifies to
    \begin{equation}
        \| \mathbf{E}^{(0)} \|_\infty \leq \left( \sum_{k=0}^{m-1} \rho^k \right) \max_i \Delta_i =
        \begin{cases}
            O(m) \cdot \max_i \Delta_i, & \textit{if }\rho = 1, \\
            O(\rho^m) \cdot \max_i \Delta_i, & \textit{if }\rho > 1, \\
            O(1) \cdot \max_i \Delta_i, & \textit{if }\rho < 1.
        \end{cases}
    \end{equation}
\end{theorem}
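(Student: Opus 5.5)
The plan is a standard forward-error recursion: write the fixed-point update as the exact update applied to the perturbed coefficient, plus a local rounding term. Concretely, at step $i$ the implementation computes $\mathbf{\Lambda}^{(i-1)} = \mathbf{\Lambda}^{(i)} \mathbf{W}^{(i)} \mathrm{diag}(\bm{\alpha}^{(i-1)}) + \mathbf{N}^{(i)}$. Here $\mathbf{N}^{(i)}$ collects every rounding injection produced by the $\mathrm{ARS}_{n_f}$ calls at that step: the truncation after $\mathsf{SecMatMul}$ with $\mathbf{W}^{(i)}$, and the truncation after multiplication by the slopes. By Assumption 1 the ring arithmetic coincides with real arithmetic up to these truncations. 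By Assumption 2 each truncation is an additive perturbation of magnitude at most $\varepsilon_q$ per entry. Subtracting the ideal recursion gives the error identity
\begin{equation*}
\mathbf{E}^{(i-1)} = \mathbf{E}^{(i)} \mathbf{W}^{(i)} \mathrm{diag}(\bm{\alpha}^{(i-1)}) + \mathbf{N}^{(i)}.
\end{equation*}
This is exact, because the map $\mathbf{X} \mapsto \mathbf{X}\mathbf{W}^{(i)}\mathrm{diag}(\bm{\alpha}^{(i-1)})$ is linear.

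Next I take induced $\infty$-norms and use submultiplicativity. Since $\bm{\alpha}^{(i-1)} \in [0,1]$ entrywise, $\|\mathrm{diag}(\bm{\alpha}^{(i-1)})\|_\infty \le 1$. This yields $\|\mathbf{E}^{(i-1)}\|_\infty \le \|\mathbf{E}^{(i)}\|_\infty \|\mathbf{W}^{(i)}\|_\infty + \|\mathbf{N}^{(i)}\|_\infty$, and I set $\Delta_i := \|\mathbf{N}^{(i)}\|_\infty$.

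For the conservative bound $\Delta_i \le q_i \varepsilon_q$, note that each row of $\mathbf{N}^{(i)}$ has $q_i$ entries, each bounded by a constant multiple of $\varepsilon_q$, so its absolute row sum is $O(q_i\varepsilon_q)$. This step is the delicate one. If two truncations occur per step, the first error is propagated through $\mathrm{diag}(\bm{\alpha})$ (norm $\le 1$), which contributes at most $\varepsilon_q$ per entry. Rounding the slopes $\bm{\alpha}$ themselves is not covered, because the theorem treats the slopes as given. I would therefore either fold the constant $2$ into $\varepsilon_q$, or read the statement as counting one effective truncation per entry, as the phrase ``one may take'' permits. I will state this modeling convention explicitly rather than hide it.

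Finally, I unroll the recurrence from $\mathbf{E}^{(m)} = \mathbf{0}$ by induction on the number of steps, which gives the sum of each $\Delta_i$ multiplied by the product of the weight norms of the steps applied after its injection. The statement's indexing is $\prod_{j=1}^{i-1}$. I would check that this matches the order in which the propagation factors are applied, adjusting the labelling if needed; under a uniform bound $\rho$ the distinction is immaterial anyway. Bounding each product by $\rho^{i-1}$ and each $\Delta_i$ by $\max_i \Delta_i$ gives the geometric sum $\sum_{k=0}^{m-1}\rho^k$. The three regimes then follow: the sum equals $m$ if $\rho = 1$, is at most $\rho^m/(\rho-1) = O(\rho^m)$ if $\rho > 1$, and is at most $1/(1-\rho) = O(1)$ if $\rho < 1$.
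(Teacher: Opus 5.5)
Your proposal is correct and follows the paper's proof essentially step for step: the exact error identity $\mathbf{E}^{(i-1)} = \mathbf{E}^{(i)}\mathbf{W}^{(i)}\mathrm{diag}(\bm{\alpha}^{(i-1)}) + (\text{rounding})$, submultiplicativity with $\|\mathrm{diag}(\bm{\alpha}^{(i-1)})\|_\infty \le 1$, defining $\Delta_i$ as the norm of the injected rounding matrix, and unrolling from $\mathbf{E}^{(m)}=\mathbf{0}$. The paper models each step as a single $Q_{n_f}$ applied to $\mathbf{\Lambda}^{(i)}\mathbf{W}^{(i)}\mathrm{diag}(\bm{\alpha}^{(i-1)})$, which gives $\Delta_i \le q_i\varepsilon_q$ directly; your two-truncation bookkeeping is closer to the implementation, yields $2q_i\varepsilon_q$, and changes only that constant. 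The indexing $\prod_{j=1}^{i-1}$ is already correct as stated, since an error injected at step $i$ propagates only through steps $i-1,\ldots,1$.
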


\begin{proof}
    Fix $i \in \{1,\ldots,m\}$. The ideal update is $\mathbf{A}^{(i-1)} = \mathbf{A}^{(i)} \mathbf{W}^{(i)} \mathrm{diag}(\bm{\alpha}^{(i-1)})$, while the actual update is
    \begin{equation}
        \mathbf{\Lambda}^{(i-1)} = Q_{n_f}(\mathbf{\Lambda}^{(i)} \mathbf{W}^{(i)} \mathrm{diag}(\bm{\alpha}^{(i-1)})),
    \end{equation}
    where we have used the fact that $\alpha_k^{(i-1)}$ is computed to lie in $[0,1]$ and treated it as the same slope vector used in both the ideal and actual computations.
    If one wishes to include quantized $\alpha_k^{(i-1)}$ in the error analysis, it appears as an additional additive term.

    By Assumption 2, we have
    \begin{equation}
        \mathbf{\Lambda}^{(i-1)} = \mathbf{\Lambda}^{(i)} \mathbf{W}^{(i)} \mathrm{diag}(\bm{\alpha}^{(i-1)}) + \mathbf{\Xi}^{(i)},
    \end{equation}
    where $\mathbf{\Xi}^{(i)}$ is the rounding error matrix injected by the $\mathrm{ARS}_{n_f}$ operation, with $| \mathbf{\Xi}^{(i)}_{j,k}| \leq \varepsilon_q$ for all entries.
    Therefore,
    \begin{equation}
        \mathbf{E}^{(i-1)}
        = \mathbf{\Lambda}^{(i-1)} - \mathbf{A}^{(i-1)}
        = (\mathbf{\Lambda}^{(i)} - \mathbf{A}^{(i)}) \mathbf{W}^{(i)} \mathrm{diag}(\bm{\alpha}^{(i-1)})  + \mathbf{\Xi}^{(i)}
        = \mathbf{E}^{(i)} \mathbf{W}^{(i)} \mathrm{diag}(\bm{\alpha}^{(i-1)})  + \mathbf{\Xi}^{(i)}.
    \end{equation}
    Taking $\| \cdot \|_\infty$ on both sides and using the sub-multiplicative property gives
    \begin{equation}
        \| \mathbf{E}^{(i-1)} \|_\infty \leq \| \mathbf{E}^{(i)} \|_\infty \| \mathbf{W}^{(i)} \|_\infty \| \mathrm{diag}(\bm{\alpha}^{(i-1)}) \|_\infty + \| \mathbf{\Xi}^{(i)} \|_\infty.
    \end{equation}
    Since $\alpha_k^{(i-1)} \in [0,1]$ for all $k$, we have $\| \mathrm{diag}(\bm{\alpha}^{(i-1)}) \|_\infty = \max_k \alpha_k^{(i-1)} \leq 1$, hence
    \begin{equation}
        \| \mathbf{E}^{(i-1)} \|_\infty \leq \| \mathbf{W}^{(i)} \|_\infty \| \mathbf{E}^{(i)} \|_\infty + \| \mathbf{\Xi}^{(i)} \|_\infty.
    \end{equation}
    Define $\Delta_i = \| \mathbf{\Xi}^{(i)} \|_\infty$ to be the local error bound at step $i$, yielding the stated recurrence.

    To bound $\Delta_i$, note that $\| \mathbf{\Xi}^{(i)} \|_\infty = \max_j \sum_k |\mathbf{\Xi}^{(i)}_{j,k}| \leq \max_j \sum_k \varepsilon_q = q_i \varepsilon_q$, where $q_i$ is the number of columns of the matrix.

    Finally, unrolling the recurrence with $\mathbf{E}^{(m)} = \mathbf{0}$ gives the stated bound on $\mathbf{E}^{(0)}$:
    \begin{align*}
        \| \mathbf{E}^{(0)} \|_\infty & \leq \| \mathbf{W}^{(1)} \|_\infty \| \mathbf{E}^{(1)} \|_\infty + \Delta_1 \\
        & \leq \| \mathbf{W}^{(1)} \|_\infty ( \| \mathbf{W}^{(2)} \|_\infty \| \mathbf{E}^{(2)} \|_\infty + \Delta_2 ) + \Delta_1 \\
        & = \| \mathbf{W}^{(1)} \|_\infty \| \mathbf{W}^{(2)} \|_\infty \| \mathbf{E}^{(2)} \|_\infty +  \| \mathbf{W}^{(1)}\|_\infty  \Delta_2 +  \Delta_1  \\
        & \cdots \\
        & \leq \sum_{i=1}^m \left( \prod_{j=1}^{i-1} \| \mathbf{W}^{(j)} \|_\infty \right) \Delta_i.
    \end{align*}
    The $\rho$-based bound follows directly from the above by substituting $\| \mathbf{W}^{(j)} \|_\infty \leq \rho$ for all $j$.
\end{proof}

Based on Theorem~\ref{thm:coefficient_error}, we now derive the error bound for the final upper/lower bound vectors.
Write the ideal final bounds over $\mathcal{X}$ as
\begin{equation*}
    UB = \mathbf{A}^{(0)} \mathbf{x}_0 + \ub{\mathbf{c}}^{(0)} + \epsilon \cdot r(\mathbf{A}^{(0)}),
    \quad
    LB = \mathbf{A}^{(0)} \mathbf{x}_0 + \lb{\mathbf{c}}^{(0)} - \epsilon \cdot r(\mathbf{A}^{(0)}),
\end{equation*}
where $r(\mathbf{A}^{(0)})$ is the row-wise $\ell_1$-norm vector of $\mathbf{A}^{(0)}$.

During backward propagation, the ideal update is
\begin{align}
    \mathbf{\ub{c}}^{(i-1)} & = \mathbf{\ub{c}}^{(i)} + \mathbf{A}^{(i)} \mathbf{b}^{(i)} + \bm{\ub{\delta}}^{(i-1)}, \\
    \mathbf{\lb{c}}^{(i-1)} & = \mathbf{\lb{c}}^{(i)} + \mathbf{A}^{(i)} \mathbf{b}^{(i)} + \bm{\lb{\delta}}^{(i-1)},
\end{align}
where $\bm{\ub{\delta}}^{(i-1)}$ and $\bm{\lb{\delta}}^{(i-1)}$ are the ReLU relaxation intercept contribution terms.
Define $u_k^{(i-1)} = (-\lb{z}^{(i-1)}_k)_+ \ge 0$ for all $k$, where $(x)_+ = \max(0, x)$ denotes the positive part, with the convention $u_k^{(0)} = 0$ since layer $0$ is the input layer with no ReLU.
Then the intercept contributions are:
\begin{align}
    \bm{\ub{\delta}}^{(i-1)}_j & = \sum_{k} (\mathbf{A}^{(i-1)}_{j,k})_+ u_k^{(i-1)}, \\
    \bm{\lb{\delta}}^{(i-1)}_j & = \sum_{k} (-\mathbf{A}^{(i-1)}_{j,k})_+ u_k^{(i-1)}.
\end{align}
The actual update performed by the fixed-point implementation is
\begin{align}
    \bm{\ub{\gamma}}^{(i-1)} & = Q_{n_f}(\bm{\ub{\gamma}}^{(i)} + \mathbf{\Lambda}^{(i)} \mathbf{b}^{(i)} + \hat{\bm{\ub{\delta}}}^{(i-1)}), \\
    \bm{\lb{\gamma}}^{(i-1)} & = Q_{n_f}(\bm{\lb{\gamma}}^{(i)} + \mathbf{\Lambda}^{(i)} \mathbf{b}^{(i)} + \hat{\bm{\lb{\delta}}}^{(i-1)}),
\end{align}
where $\hat{\bm{\ub{\delta}}}^{(i-1)}$ and $\hat{\bm{\lb{\delta}}}^{(i-1)}$ are computed using the actual coefficient matrix $\mathbf{\Lambda}^{(i-1)}$:
\begin{align}
    \hat{\bm{\ub{\delta}}}^{(i-1)}_j & = \sum_{k} (\mathbf{\Lambda}^{(i-1)}_{j,k})_+ u_k^{(i-1)}, \\
    \hat{\bm{\lb{\delta}}}^{(i-1)}_j & = \sum_{k} (-\mathbf{\Lambda}^{(i-1)}_{j,k})_+ u_k^{(i-1)}.
\end{align}
Therefore, the error in the bias term at layer $i-1$, denoted as $\bm{\ub{\nu}}^{(i-1)} \coloneq \bm{\ub{\gamma}}^{(i-1)} - \mathbf{\ub{c}}^{(i-1)}$, can be decomposed component-wise (for index $j$) as
\begin{align}
    \bm{\ub{\nu}}^{(i-1)}_j
    &= \bm{\ub{\gamma}}^{(i)}_j + [\mathbf{\Lambda}^{(i)} \mathbf{b}^{(i)}]_j + \hat{\bm{\ub{\delta}}}^{(i-1)}_j + \Theta^{(i)}_j
    - \Bigl(\mathbf{\ub{c}}^{(i)}_j + [\mathbf{A}^{(i)} \mathbf{b}^{(i)}]_j + \bm{\ub{\delta}}^{(i-1)}_j\Bigr) \notag \\
    &= \bm{\ub{\nu}}^{(i)}_j + [\mathbf{E}^{(i)} \mathbf{b}^{(i)}]_j
    + \sum_{k} \Bigl[(\mathbf{\Lambda}^{(i-1)}_{j,k})_+ - (\mathbf{A}^{(i-1)}_{j,k})_+\Bigr] u_k^{(i-1)}
    + \Theta^{(i)}_j,
\end{align}
where $\Theta^{(i)}$ is the rounding error injected by the $\mathrm{ARS}_{n_f}$ operation, with $\|\Theta^{(i)}\|_\infty \leq \varepsilon_q$.
Using the 1-Lipschitz property of $(\cdot)_+$, we can bound the error in the intercept contribution:
\begin{equation}
    |\hat{\bm{\ub{\delta}}}^{(i-1)}_j - \bm{\ub{\delta}}^{(i-1)}_j| \le \sum_{k} |(\mathbf{\Lambda}^{(i-1)}_{j,k})_+ - (\mathbf{A}^{(i-1)}_{j,k})_+| u_k^{(i-1)} \le \sum_{k} |\mathbf{E}^{(i-1)}_{j,k}| u_k^{(i-1)} \le \|\mathbf{E}^{(i-1)}_{j,:}\|_1 \|u^{(i-1)}\|_\infty.
\end{equation}
Taking the infinity norm over $j$, we get:
\begin{equation}
    \|\hat{\bm{\ub{\delta}}}^{(i-1)} - \bm{\ub{\delta}}^{(i-1)}\|_\infty \le \|\mathbf{E}^{(i-1)}\|_\infty \|u^{(i-1)}\|_\infty.
\end{equation}
Unrolling the recurrence for $\bm{\ub{\nu}}^{(0)}$ (with $\bm{\ub{\nu}}^{(m)} = \mathbf{0}$) gives
\begin{equation}
    \|\bm{\ub{\nu}}^{(0)}\|_\infty = \|\bm{\ub{\gamma}}^{(0)} - \mathbf{\ub{c}}^{(0)}\|_\infty \leq \sum_{i=1}^m \left( \|\mathbf{E}^{(i)}\|_\infty \|\mathbf{b}^{(i)}\|_\infty + \|\mathbf{E}^{(i-1)}\|_\infty \|u^{(i-1)}\|_\infty \right) + m \varepsilon_q.
\end{equation}
A symmetric bound holds for the lower bound bias error $\|\bm{\lb{\nu}}^{(0)}\|_\infty = \|\bm{\lb{\gamma}}^{(0)} - \mathbf{\lb{c}}^{(0)}\|_\infty$ (replacing $(\cdot)_+$ by $(-\cdot)_+$ throughout).

\begin{theorem}\label{thm:bound_error}
    Let $\widehat{UB},\widehat{LB}$ be the fixed-point outputs, and define $\mathbf{E}^{(0)} = \mathbf{\Lambda}^{(0)} - \mathbf{A}^{(0)}$.
    Under Assumptions 1 and 2, the absolute errors of the final bounds are bounded by:
    \begin{align}
        \|\widehat{UB} - UB\|_\infty
        &\le (\|\mathbf{x}_0\|_\infty + \epsilon)\|\mathbf{E}^{(0)}\|_\infty
        + \sum_{i=1}^m \left( \|\mathbf{E}^{(i)}\|_\infty \|\mathbf{b}^{(i)}\|_\infty + \|\mathbf{E}^{(i-1)}\|_\infty \|u^{(i-1)}\|_\infty \right) + \zeta_u, \\
        \|\widehat{LB} - LB\|_\infty
        &\le (\|\mathbf{x}_0\|_\infty + \epsilon)\|\mathbf{E}^{(0)}\|_\infty
        + \sum_{i=1}^m \left( \|\mathbf{E}^{(i)}\|_\infty \|\mathbf{b}^{(i)}\|_\infty + \|\mathbf{E}^{(i-1)}\|_\infty \|u^{(i-1)}\|_\infty \right) + \zeta_l,
    \end{align}
    where $u^{(i-1)} = (-\lb{\mathbf{z}}^{(i-1)})_+ \ge 0$ with $u^{(0)} = 0$ (no ReLU at the input layer), and $\zeta_u, \zeta_l$ are assembly-level quantization residuals bounded by $(m+1)\varepsilon_q$.
    Therefore, by substituting Theorem~\ref{thm:coefficient_error} bounds for $\|\mathbf{E}^{(i)}\|_\infty$, one obtains an explicit depth-dependent absolute error bound for both $\widehat{UB}$ and $\widehat{LB}$.
\end{theorem}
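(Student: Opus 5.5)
The plan is to decompose the fixed-point output $\widehat{UB}$ term by term against the ideal expression $UB = \mathbf{A}^{(0)}\mathbf{x}_0 + \ub{\mathbf{c}}^{(0)} + \epsilon\, r(\mathbf{A}^{(0)})$, and then apply the triangle inequality. The $\widehat{LB}$ case is handled identically, with the sign of the norm term flipped and the lower-bias recursion used in place of the upper one.

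Write the implementation's assembly step as
\[
\widehat{UB} = Q_{n_f}\bigl(\mathbf{\Lambda}^{(0)}\mathbf{x}_0\bigr) + \ub{\bm{\gamma}}^{(0)} + Q_{n_f}\bigl(\epsilon\, r(\mathbf{\Lambda}^{(0)})\bigr).
\]
Subtracting $UB$ leaves three groups of terms.

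\textbf{Linear term.} This group is $\mathbf{E}^{(0)}\mathbf{x}_0$ plus one rounding residual. Since $\|\mathbf{E}^{(0)}\mathbf{x}_0\|_\infty \le \|\mathbf{E}^{(0)}\|_\infty\|\mathbf{x}_0\|_\infty$ by the definition of the induced norm, it contributes $\|\mathbf{x}_0\|_\infty\|\mathbf{E}^{(0)}\|_\infty$ plus at most $\varepsilon_q$.

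\textbf{Bias term.} This group is $\ub{\bm{\nu}}^{(0)} = \ub{\bm{\gamma}}^{(0)} - \ub{\mathbf{c}}^{(0)}$, which was already bounded just before the statement by unrolling the bias recurrence. That unrolling used the 1-Lipschitz property of $(\cdot)_+$ to control the perturbed intercepts $\hat{\ub{\bm{\delta}}}$. It gives the summation term plus $m\varepsilon_q$.

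\textbf{Norm term.} This group is $\epsilon\,\bigl(r(\mathbf{\Lambda}^{(0)}) - r(\mathbf{A}^{(0)})\bigr)$ plus one rounding residual. Row by row, the reverse triangle inequality gives
\[
\bigl|r(\mathbf{\Lambda}^{(0)})_i - r(\mathbf{A}^{(0)})_i\bigr| \le \sum_k \bigl|\mathbf{E}^{(0)}_{i,k}\bigr| \le \|\mathbf{E}^{(0)}\|_\infty,
\]
so this group contributes $\epsilon\|\mathbf{E}^{(0)}\|_\infty$. Combined with the linear term, this produces the coefficient $(\|\mathbf{x}_0\|_\infty+\epsilon)$ in the statement. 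Here $\epsilon$ is treated as exact, or its quantization is absorbed into the residual.

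Collecting the rounding injections of the assembly step together with the $m\varepsilon_q$ from the bias recursion defines $\zeta_u$. The main bookkeeping obstacle is showing that this total does not exceed $(m+1)\varepsilon_q$. A naive count gives $m$ truncations in the bias recursion plus two in assembly (one for the matrix–vector product, one for $\epsilon\cdot r$), for $m+2$. To get $m+1$, I would argue that under the protocol $\mathbf{\Lambda}^{(0)}\mathbf{x}_0$ and $\epsilon\,r(\mathbf{\Lambda}^{(0)})$ are summed with $\ub{\bm{\gamma}}^{(0)}$ before a single final $\mathrm{ARS}_{n_f}$, since the additions are local and fixed-point addition needs no rescaling. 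Then only one extra truncation is charged. If that merging is not justified, I would state the constant as $(m+2)\varepsilon_q$, which keeps the same order.

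The lower bound follows the same steps with $-\epsilon\,r(\cdot)$ and $\lb{\bm{\nu}}^{(0)}$. The final remark in the statement then follows by substituting the unrolled bound of Theorem~\ref{thm:coefficient_error} for each $\|\mathbf{E}^{(i)}\|_\infty$, including $\|\mathbf{E}^{(0)}\|_\infty$.
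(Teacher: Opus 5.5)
Your proposal is correct and follows the paper's proof essentially step for step. It uses the same decomposition of $\widehat{UB}-UB$ into $\mathbf{E}^{(0)}\mathbf{x}_0$, $\epsilon\bigl(r(\mathbf{\Lambda}^{(0)})-r(\mathbf{A}^{(0)})\bigr)$, $\bm{\ub{\nu}}^{(0)}$ and an assembly residual, with the same induced-norm and row-wise reverse-triangle inequalities and the previously unrolled bias-recursion bound.

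The only difference is bookkeeping. The paper does not argue for a merged final truncation; it simply models all assembly rounding as a single term $\xi_u$ with $\|\xi_u\|_\infty\le\varepsilon_q$, so its $(m+1)\varepsilon_q$ is a modeling convention. Your $(m+2)\varepsilon_q$ fallback is in fact the more literal count for the assembly in Algorithm~\ref{alg:lbp}, where $\mathbf{\Lambda}^{(0)}\mathbf{x}_0$ and $\epsilon\cdot r(\mathbf{\Lambda}^{(0)})$ come from separately truncated $\mathsf{SecMatMul}$ and $\mathsf{SecMul}$ calls.
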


\begin{proof}
    We prove the upper-bound case; the lower-bound case is identical.
    The ideal final upper bound is given by
    \begin{equation*}
        UB = \mathbf{A}^{(0)} \mathbf{x}_0 + \mathbf{\ub{c}}^{(0)} + \epsilon \cdot r(\mathbf{A}^{(0)}).
    \end{equation*}
    The fixed-point implementation computes the final bound as
    \begin{equation*}
        \widehat{UB} = \mathbf{\Lambda}^{(0)} \mathbf{x}_0 + \bm{\ub{\gamma}}^{(0)} + \epsilon \cdot r(\mathbf{\Lambda}^{(0)}) + \xi_u,
    \end{equation*}
    where $\xi_u$ represents the rounding error introduced during the final assembly operations (e.g., dot products and additions), with $\|\xi_u\|_\infty \le \varepsilon_q$.
    Decomposing the difference yields
    \begin{equation*}
        \widehat{UB} - UB
        = \mathbf{E}^{(0)} \mathbf{x}_0
        + \epsilon\bigl(r(\mathbf{\Lambda}^{(0)})-r(\mathbf{A}^{(0)})\bigr)
        + \bm{\ub{\nu}}^{(0)}
        + \xi_u.
    \end{equation*}
    Taking the infinity norm and applying the triangle inequality gives
    \begin{equation*}
        \|\widehat{UB}-UB\|_\infty
        \le \|\mathbf{E}^{(0)}\mathbf{x}_0\|_\infty
        + \epsilon\|r(\mathbf{\Lambda}^{(0)})-r(\mathbf{A}^{(0)})\|_\infty
        + \|\bm{\ub{\nu}}^{(0)}\|_\infty
        + \|\xi_u\|_\infty.
    \end{equation*}
    By norm inequalities, we have
    \begin{equation*}
        \|\mathbf{E}^{(0)}\mathbf{x}_0\|_\infty \le \|\mathbf{E}^{(0)}\|_\infty\|\mathbf{x}_0\|_\infty,
        \quad
        \|r(\mathbf{\Lambda}^{(0)})-r(\mathbf{A}^{(0)})\|_\infty \le \|\mathbf{E}^{(0)}\|_\infty.
    \end{equation*}
    Substituting the bound for $\|\bm{\ub{\nu}}^{(0)}\|_\infty$ derived earlier, we obtain
    \begin{align*}
        \|\widehat{UB}-UB\|_\infty
        &\le (\|\mathbf{x}_0\|_\infty + \epsilon)\|\mathbf{E}^{(0)}\|_\infty
        + \sum_{i=1}^m \left( \|\mathbf{E}^{(i)}\|_\infty \|\mathbf{b}^{(i)}\|_\infty + \|\mathbf{E}^{(i-1)}\|_\infty \|u^{(i-1)}\|_\infty \right) + m \varepsilon_q + \varepsilon_q \\
        &= (\|\mathbf{x}_0\|_\infty + \epsilon)\|\mathbf{E}^{(0)}\|_\infty
        + \sum_{i=1}^m \left( \|\mathbf{E}^{(i)}\|_\infty \|\mathbf{b}^{(i)}\|_\infty + \|\mathbf{E}^{(i-1)}\|_\infty \|u^{(i-1)}\|_\infty \right) + \zeta_u,
    \end{align*}
    where $\zeta_u \le (m+1)\varepsilon_q$.
\end{proof}

\begin{remark}
{The stability constant $\epsilon_s$ in Eq.~\eqref{eq:alpha_unified}
perturbs each relaxation slope from its ideal value, and hence the
relaxation bounding functions. For every neuron $k$, the sup-norm
deviation of the perturbed bounding functions over
$[\underline{z}_k, \bar{z}_k]$ is at most $\epsilon_s$: for
unstable neurons this follows from
$\max(|\underline{z}_k|, \bar{z}_k) \leq \bar{z}_k - \underline{z}_k$;
the active case gives deviation
$\epsilon_s \bar{z}_k / (\bar{z}_k + \epsilon_s) \leq \epsilon_s$,
and the inactive case is exact. 
Substituting the perturbed relaxations in the backward recursion
therefore injects, at each step $i$, an additional deviation of at
most $\|\hat{\mathbf{A}}^{(i)}\|_\infty\,\epsilon_s$ into the
accumulated bound values, where $\hat{\mathbf{A}}^{(i)}$ is the
intermediate coefficient matrix in Eq.~\eqref{eq:Ahat_update} and,
as in the proof of Theorem~\ref{thm:coefficient_error}, the
relaxation parameters at the remaining layers are treated as
fixed. The effect on the final bounds thus follows the same
depth-dependent structure as Theorem~\ref{thm:bound_error}, with
the per-step injection scale $\varepsilon_q$ replaced by
$\|\hat{\mathbf{A}}^{(i)}\|_\infty\,\epsilon_s$.
Since the perturbed relaxation deviates from the ideal one, a
verification outcome could in principle change when the certified
margin lies within $O(\epsilon_s)$ of zero; empirically, no such
change is observed even at $1000\times$ the default $\epsilon_s$
(Table~\ref{tab:ablation_eps}). A slope--intercept co-adjustment
that restores provable soundness of the perturbed relaxation is
left for future work.}
\end{remark}

\begin{corollary}[Asymptotic error bound]
    Under the hypotheses of Theorems~\ref{thm:coefficient_error} and~\ref{thm:bound_error}, assume $\mathbf{E}^{(m)}=\mathbf{0}$, $\|\mathbf{W}^{(i)}\|_\infty\le\rho$ and $\Delta_i\le\Delta$ for all $i$.
    Let
    \[
        X \;=\; \|\mathbf{x}_0\|_\infty+\epsilon,
        \qquad
        B \;=\; \max_i \|\mathbf{b}^{(i)}\|_\infty,
        \qquad
        U \;=\; \max_i \|u^{(i-1)}\|_\infty,
    \]
    and let $S_n = \sum_{k=0}^{n-1}\rho^k$, so that the layer-$t$ coefficient error satisfies $\|\mathbf{E}^{(t)}\|_\infty \le S_{m-t}\,\Delta$.
    Then, as $m\to\infty$, the error in $\widehat{UB}$ (and identically in $\widehat{LB}$) satisfies
    \begin{equation}
        \|\widehat{UB}-UB\|_\infty
        \;\le\;
        \Bigl[X\cdot S_m \;+\; B\sum_{i=1}^{m} S_{m-i} \;+\; U\sum_{i=1}^{m} S_{m-i+1}\Bigr]\Delta \;+\; O(m)\,\varepsilon_q,
    \end{equation}
    where the three asymptotic regimes of the leading term (as a function of $m$) are:
    \begin{equation}
        \|\widehat{UB}-UB\|_\infty \;=\;
        \begin{cases}
            O(m)\cdot(B+U)\Delta \;+\; O(1)\cdot X\Delta \;+\; O(m)\,\varepsilon_q,
            & \text{if } \rho < 1, \\[4pt]
            O(m^2)\cdot(B+U)\Delta \;+\; O(m)\cdot X\Delta \;+\; O(m)\,\varepsilon_q,
            & \text{if } \rho = 1, \\[4pt]
            O(\rho^m)\cdot(X+B+U)\Delta \;+\; O(m)\,\varepsilon_q,
            & \text{if } \rho > 1.
        \end{cases}
    \end{equation}
    In particular, the $\zeta$ residuals are $O(m)\,\varepsilon_q$ in all cases.
\end{corollary}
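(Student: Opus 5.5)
The plan is to substitute the coefficient-error bound of Theorem~\ref{thm:coefficient_error} into every term of Theorem~\ref{thm:bound_error}, and then sum the resulting geometric-type series in each of the three regimes of $\rho$.

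\textbf{Step 1: per-layer coefficient errors.} Apply Theorem~\ref{thm:coefficient_error} starting from layer $m$ instead of layer $0$. The recurrence $\|\mathbf{E}^{(i-1)}\|_\infty \le \rho\|\mathbf{E}^{(i)}\|_\infty + \Delta$ with $\mathbf{E}^{(m)}=\mathbf{0}$ unrolls, after $m-t$ steps, to $\|\mathbf{E}^{(t)}\|_\infty \le \sum_{k=0}^{m-t-1}\rho^k\Delta = S_{m-t}\Delta$, with $S_0=0$. This is exactly the claimed layer-$t$ bound, obtained by a one-line induction on $m-t$.

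\textbf{Step 2: substitution.} Plug Step 1 into Theorem~\ref{thm:bound_error}:
\begin{itemize}
\item The term $(\|\mathbf{x}_0\|_\infty+\epsilon)\|\mathbf{E}^{(0)}\|_\infty$ becomes $X S_m\Delta$.
\item The bias terms are at most $B\sum_{i=1}^m S_{m-i}\Delta$.
\item The intercept terms are at most $U\sum_{i=1}^m S_{m-i+1}\Delta$, using $\|u^{(i-1)}\|_\infty\le U$.
\item The residual $\zeta_u\le(m+1)\varepsilon_q$ is $O(m)\varepsilon_q$.
\end{itemize}
Together these give the first displayed inequality. The $\widehat{LB}$ case is identical because Theorem~\ref{thm:bound_error} gives the same bound for it.

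\textbf{Step 3: asymptotics.} Evaluate the sums $\sum_{i=1}^m S_{m-i}=\sum_{n=0}^{m-1}S_n$ and $\sum_{i=1}^m S_{m-i+1}=\sum_{n=1}^{m}S_n$ in each regime.
\begin{itemize}
\item If $\rho<1$: $S_n\le 1/(1-\rho)$, so both sums are $O(m)$ and $XS_m=O(1)$.
\item If $\rho=1$: $S_n=n$, so both sums are $\Theta(m^2)$ and $XS_m=m$.
\item If $\rho>1$: $S_n=(\rho^n-1)/(\rho-1)=O(\rho^n)$. Then $\sum_{n\le m}S_n\le \frac{1}{\rho-1}\sum_{n\le m}\rho^n=O(\rho^m)$, since a geometric sum with ratio $>1$ is dominated by its last term up to the constant $\rho/(\rho-1)$. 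So all three terms collapse to $O(\rho^m)(X+B+U)\Delta$.
\end{itemize}

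The main obstacle is bookkeeping rather than mathematics. The index shift between $S_{m-i}$ and $S_{m-i+1}$ for the intercept term comes from $\mathbf{E}^{(i-1)}$ versus $\mathbf{E}^{(i)}$, and it must be tracked correctly. The edge term involving $u^{(0)}=0$ also needs care: it only helps, and I will simply bound it by $U$ as well. The implicit constants depend on $\rho$ but not on $m$, which is the sense in which the $O(\cdot)$ notation is meant.
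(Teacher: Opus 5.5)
Your proposal is correct and takes the same approach as the paper. The paper gives no separate proof of this corollary beyond what its statement already records: unroll the recurrence of Theorem~\ref{thm:coefficient_error} to get $\|\mathbf{E}^{(t)}\|_\infty \le S_{m-t}\Delta$, substitute into Theorem~\ref{thm:bound_error} with $\zeta_u \le (m+1)\varepsilon_q$, and sum $S_n$ in each of the three regimes of $\rho$. Your index bookkeeping ($S_{m-i}$ for the bias terms, $S_{m-i+1}$ for the intercept terms, $S_0=0$) and your geometric-sum estimate for $\rho>1$ are both sound.
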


\paragraph{Interpretation of the unstable regime.}
The $\rho>1$ branch in the asymptotic corollary should be read as a
conservative worst-case envelope rather than as the typical numerical
behavior of \name{}.
It is obtained by replacing every layer-dependent amplification factor
with the same uniform upper bound $\rho$ and every local rounding
injection with the same upper bound $\Delta$.
The non-asymptotic analysis above is sharper and remains layer-wise.
Indeed, before the conservative inequality
$\|\mathrm{diag}(\bm{\alpha}^{(i-1)})\|_\infty\le 1$ is applied, the
proof of Theorem~\ref{thm:coefficient_error} gives
\begin{equation}
    \|\mathbf{E}^{(i-1)}\|_\infty
    \le
    \kappa_i \|\mathbf{E}^{(i)}\|_\infty + \Delta_i,
    \qquad
    \kappa_i
    \coloneq
    \|\mathbf{W}^{(i)}\mathrm{diag}(\bm{\alpha}^{(i-1)})\|_\infty .
\end{equation}
Thus, when $\mathbf{E}^{(m)}=\mathbf{0}$,
\begin{equation}
    \|\mathbf{E}^{(0)}\|_\infty
    \le
    \sum_{i=1}^{m}
    \left(\prod_{j=1}^{i-1}\kappa_j\right)\Delta_i .
\end{equation}
This expression propagates each local error through the actual product
of effective layer norms.
Consequently, the simplified $O(\rho^m)$ term arises only when these
heterogeneous products are upper-bounded by the same worst-case factor
at every layer; factors $\kappa_j<1$ can attenuate errors produced at
other layers with $\kappa_j>1$.
The diagonal relaxation matrices further reduce the effective
amplification: since $\alpha_k^{(i-1)}\in[0,1]$, the factor
$\kappa_i$ is a row-wise weighted sum of $|\mathbf{W}^{(i)}|$ with
weights $\bm{\alpha}^{(i-1)}$, and is always at most
$\|\mathbf{W}^{(i)}\|_\infty$.
When many slopes are strictly below one, as for inactive or unstable
ReLU neurons, $\kappa_i$ can be substantially smaller than the raw
weight norm of the layer and can even fall below one when
$\|\mathbf{W}^{(i)}\|_\infty>1$.

We do not rule out the vulnerability captured by this worst-case bound:
if a network has consistently large effective layer norms, fixed-point
rounding errors can be amplified and the numerical precision of the
certified bounds can deteriorate.
In standard modern training pipelines, however, small initialization,
weight decay, and related regularization or normalization techniques
typically limit uncontrolled growth of weight norms.
Consistent with this interpretation, our experiments do not exhibit the
exponential-growth regime: \name{} maintains 100\% verification
consistency with plaintext CROWN across all evaluated architectures and
perturbation settings, including near-boundary instances
(Table~\ref{tab:fidelity_final_complete}).

\section{{Discussion on Malicious Security}}
\label{app:malicious}

{SECURECROWN currently targets the semi-honest model, which is 
standard for initial 2PC-based protocol designs. A malicious-secure 
upgrade could replace unauthenticated Beaver triples with authenticated 
triples and add MAC-based consistency checks, following SPDZ-style 
protocols. The trusted-dealer assumption can also be removed by using 
two-party preprocessing, such as OT-based triple generation, without 
changing the online verifier logic. These changes preserve the 
branch-free reformulation of CROWN, but increase both online 
communication and offline preprocessing costs.}

{To estimate this overhead, we implemented the same verification 
algorithm in MP-SPDZ. On MNIST \(5{\times}[256]\), replacing 
\texttt{semi2k} with \texttt{spdz2k} increases online time by about 
\(3.2\times\) (\(183\)s to \(586\)s) and communication by about 
\(8.7\times\) (\(1.26\)TB to \(10.88\)TB). In comparison, our 
FSS-based semi-honest implementation completes the same task in 
\(4.3\)s with \(113\)MB communication. This suggests that a specialized 
malicious-secure variant of SECURECROWN could be substantially more 
efficient than directly using a generic malicious 2PC backend, although 
designing and benchmarking such a variant is left for future work.}

\section{{Extension to Other Network Architectures}}
\label{app:extensions}

{The current implementation targets CROWN-style LBP for fully 
connected ReLU networks. We discuss extensions to other network 
components according to the type of additional support required.}

\subsection{{Components Requiring Minimal Cryptographic Changes}}

{\textbf{Batch normalization.} Batch normalization (BN) can be 
folded into the preceding affine layer by \(P_0\) before secret 
sharing, since all BN parameters are fixed after training. This 
requires no change to the secure verification protocol.}

{\textbf{Residual connections.} Residual connections compute 
\(\mathbf{y}=F(\mathbf{x})+\mathbf{x}\), where \(F(\mathbf{x})\) 
denotes the residual branch. From the secure-computation perspective, 
this addition is local under additive secret sharing and incurs no 
communication. Therefore, residual connections require no new 
cryptographic primitives, although the verifier implementation would 
need to handle skip connections and merge the propagated bounds from 
multiple graph branches.}

\subsection{{Linear Components with Scalability Bottlenecks}}

{\textbf{Convolutional layers.} Convolutions can be represented as 
matrix multiplications 
\(\mathbf{y}=W_{\mathrm{toep}}\mathbf{x}+\mathbf{b}\), where 
\(\mathbf{x}\) and \(\mathbf{y}\) denote vectorized input and output 
feature maps, and \(W_{\mathrm{toep}}\) is the sparse Toeplitz matrix 
induced by the convolutional kernels. Therefore, convolutions are 
compatible with our \texttt{SecMatMul} interface in principle. However, 
a naive dense unfolding loses the sparsity of the convolutional kernel. 
For input and output channel counts \(c_{\mathrm{in}}\) and 
\(c_{\mathrm{out}}\), input and output spatial sizes \(s\) and \(s'\), 
and kernel size \(\kappa\), the unfolded matrix has dimension 
\((c_{\mathrm{out}}s'^2)\times(c_{\mathrm{in}}s^2)\), whereas the 
convolution itself has only \(c_{\mathrm{out}}c_{\mathrm{in}}\kappa^2\) 
independent kernel parameters. Since our Beaver-based \texttt{SecMatMul} 
treats the unfolded matrix as dense, communication scales with the full 
matrix dimensions rather than with the number of independent kernel 
parameters. Practical support for CNNs would therefore require secure 
convolution protocols that exploit kernel sparsity or tiled/sliding-window 
structure.}

{\textbf{Cost projection for ResNet-18.} For a CIFAR-10-scale 
ResNet-18, a naive Toeplitz unfolding of a \(64\)-channel 
\(32\times32\) convolution can yield matrices up to 
\(65{,}536\times65{,}536\). Repeated dense secure matrix multiplications 
inside LBP would therefore lead to communication on the order of 
terabytes over the full verification procedure. Exploiting the sparse 
Toeplitz structure, as done in optimized secure-inference systems such 
as CryptGPU~\citep{Tan2021cryptgpu} and 
CrypTFlow~\citep{Kumar2020cryptflow}, is a promising direction for 
reducing this cost.}

\subsection{{Components Requiring New Verification Relaxations}}

{\textbf{Non-ReLU activations.} Non-ReLU activations, such as sigmoid 
or tanh, require activation-specific linear relaxations. Extending 
SecureCROWN to these activations would therefore require new branch-free 
reformulations of the corresponding relaxation logic. This does not 
necessarily require new cryptographic primitives: the underlying FSS 
primitives already support operations such as spline and exponential 
evaluations~\citep{Gupta2025shark}, providing a cryptographic foundation 
for these extensions. The main challenge is deriving 
secure-computation-friendly bound propagation rules that minimize secure 
comparisons and communication overhead.}

{\textbf{Attention layers.} Attention layers compute 
\(\mathrm{softmax}(QK^\top/\sqrt{d_k})V\), where \(Q\), \(K\), and 
\(V\) denote the query, key, and value matrices, respectively, and 
\(d_k\) is the key dimension. This operation involves bilinear terms in 
\(QK^\top\) and the softmax nonlinearity. Unlike ReLU, which is a 
univariate piecewise-linear function, attention introduces nonlinearities 
whose tight and scalable verification remains substantially more complex 
even in plaintext verification~\citep{shi2025genbab}. Extending 
SECURECROWN to attention-based architectures would therefore first 
require mature CROWN-style relaxations for these operations.}

\end{document}